\newtheorem{theorem}{Theorem}[section]
\newtheorem{definition}[theorem]{Definition}
\newtheorem{lemma}[theorem]{Lemma}
\newtheorem{corollary}[theorem]{Corollary}
\newtheorem{remark}[theorem]{Remark}
\newtheorem{example}[theorem]{Example}
\newtheorem{examples}[theorem]{Examples}
\newcommand{\sm}{\setminus}
\newcommand{\eset}{\emptyset}
\newcommand{\cM}{{\mathcal M}}
\newcommand{\cX}{{\mathcal X}}
\newcommand{\cC}{{\mathcal C}}
\newcommand{\TD}{\mathrm{TD}}
\newcommand{\RTD}{\mathrm{RTD}}
\newcommand{\PBTD}{\mathrm{PBTD}}
\newcommand{\VCD}{\mathrm{VCD}}
\newcommand{\seq}{\subseteq}
\newcommand{\ra}{\rightarrow}
\newcommand{\Span}[1]{\langle #1 \rangle}
\newcommand{\impl}{\Rightarrow}
\newcommand{\dund}{\Leftrightarrow}
\DeclareMathOperator{\order}{order}
\begin{document}

\title{RTD-Conjecture and Concept Classes \\ Induced by Graphs} 

\author{Hans Ulrich Simon}



\maketitle

\begin{abstract}
It is conjectured in~\cite{SZ2015} that the recursive teaching dimension 
of any finite concept class is upper-bounded by the VC-dimension 
of this class times a universal constant. In this paper, we confirm this
conjecture for two rich families of concept classes where each class
is induced by some graph $G$. For each $G$, we consider the class whose 
concepts represent stars in $G$ as well as  the class whose concepts 
represent connected sets in $G$. We show that, for concept classes of 
this kind, the recursive teaching dimension either equals the VC-dimension 
or is less by $1$.
\end{abstract}



\section{Introduction} \label{sec:introduction}

Teaching models explore the opportunities offered by learning from
carefully chosen examples (as opposed to random examples). 
Let $\cM$ be a model of teaching and let $\cC$ be a class of concepts.
The \emph{$\cM$-teaching dimension of $\cC$} is defined as the smallest
number $d$ such that every concept in $\cC$ can be inferred from $d$
(appropriately chosen) examples in accordance with the rules of the 
model~$\cM$.

In the original teaching model, as introduced by Shinohara and Miyano~\cite{SM1991}
resp.~by Goldman and Kearns~\cite{GK1995}, the examples used for teaching 
a concept $C$ must distinguish $C$ from any other concept in the underlying
concept class. This strong requirement, however, leads to intolerably large 
teaching dimensions even for very simple concept classes. The model of recursive 
teaching\footnote{See Section~\ref{subsec:teaching-models} for a definition.} 
had been suggested in~\cite{ZLHZ2011} as an  alternative (among others) to the 
original model. The corresponding teaching dimension is called the RT-dimension
and denoted by~$\RTD$. 

As demonstrated in~\cite{DFSZ2014} and~\cite{SSYZ2014}, 
the RT-dimension of a concept class $\cC$ is often closely 
related to the VC-dimension\footnote{The VC-dimension, first introduced 
in the pioneering work of Vapnik and Chervonenkis~\cite{VC1971}, plays
a central role in the theory of learning from random examples.} of $\cC$. 
The latter will be denoted by $\VCD(\cC)$ in the sequel.
The following is among the main open problems in connection with teaching:
\begin{description}
\item[RTD-Conjecture:] Is there a universal constant $c>0$
such that, for each finite concept class $\cC$, we have
that $\RTD(\cC) \le c \cdot \VCD(\cC)$?
\end{description}
It is known~\cite{HWLW2017} that $\RTD(\cC) = O\left(\VCD(\cC)^2\right)$.
It is furthermore known~\cite{CCT2016} that there exists a concept class $\cC$ 
such that $\RTD(\cC) \ge \frac{5}{3}\cdot\VCD(\cC)$.

We also briefly would like to mention that the RTD-conjecture is related 
to the well-known Sample-Compression conjecture of Warmuth~\cite{W2003}: 
\begin{itemize}
\item 
There is a one-to-one correspondence between so-called repetition-free
teaching sets in the RT-model and unlabeled sample compression schemes
having the acyclic non-clashing property. See Lemma~27 in~\cite{DFSZ2014}
for technical details.
\item
See also~\cite{DDSZ2013} for a connection between recursive teaching
and so-called order-compression schemes. The latter are special 
sample-compression schemes.
\end{itemize}
We cannot go into detail at this point. But it seems to be fair to say 
that progress with the RTD-conjecture is likely to produce progress
with the Sample-Compression conjeture, and vice versa.

Up to date, it is not known whether the RTD-conjecture can be confirmed in general.
We will show in this paper that it can be confirmed in a very strong sense
for two rich families of concept classes where each class is induced by 
some graph $G$. We consider classes whose concepts represent stars in $G$ 
as well as classes whose concepts represent connected sets 
in~$G$.\footnote{The VC-dimension of these classes has been studied
before in~\cite{KKRUW1997}.} We show that, for concept classes $\cC$ 
of this kind, it holds that
\[ \VCD(\cC)-1 \le \RTD(\cC) \le \VCD(\cC) \enspace . \]

\medskip\noindent
This paper is structured as follows. In Section~\ref{sec:definitions},
we fix some notation and call to mind some basic definitions (including
the definition of various teaching models). The main results on
graph-induced concept classes are stated and proved 
in Section~\ref{sec:graph-induced-classes}. The paper is closed
by mentioning some directions for future research.

\section{Definitions, Notations and Facts} \label{sec:definitions}

The powerset of a set $M$ is denoted as $2^M$.
For integers $i \le j$, we define $[i:j] = \{i,\ldots,j\}$,
i.e., $[i:j]$ denotes the interval of integers that range
from $i$ to $j$. Moreover, we set $[n] = [1:n]$.

\subsection{Graphs and Subgraphs} \label{subsec:graphs}

We first fix some notation for special graphs.
The complete graph with $n$ vertices is denoted by $K_n$.
Let $P_n$ denote a path of length $n$. Similarly, let $C_n$ 
denote a cycle of length $n$.

\begin{definition}[Open and Closed Neighborhood] \label{def:neighborhood}
Let $G=(V,E)$ be a graph and let $x \in V$ be a vertex in $G$.
Then $N_G^o(x) = \{y|\{x,y\} \in E\}$ denotes the set of vertices
that are adjacent to $x$ in $G$. $N_G^o(x)$ is said to be the 
{\em open neighborhood} of $x$ in $G$ 
whereas $N_G(x) = N_G^o(x)\cup\{x\}$ is called the 
{\em closed neighborhood} of $x$ in $G$. More generally, 
for $X \seq V$, we define $N_G(X) = \cup_{x \in X}N_G(x)$
and $N_G^o(X) = N_G(X) \sm X$.
\end{definition}

\begin{remark}
For all (not necessarily distinct) vertices $x,y \in V$, 
we have that $N_G^o(x) \neq N_G(y)$.
\end{remark}

\begin{proof}
Assume for sake of contradiction that $N_G^o(x) = N_G(y)$.
Since $y \in N_G(y)$, we also have $y \in N_G^o(x)$.
Thus $y$ is adjacent to $x$. On the other hand, since $x \notin N_G^o(x)$,
we have that $x \notin N_G(y)$. Thus $x$ is not adjacent to $y$.
We arrived at a contradiction.
\end{proof}

\begin{definition}[Subgraph-Relation]
Let $G = (V,E)$ be a graph. A graph $G' = (V',E')$ is a \emph{subgraph 
of $G$}, which is denoted as $G' \le G$, if $V' \seq V$ and $E' \le E$.
By $\Span{V'}_G$ we denote the subgraph of $G = (V,E)$ that is
\emph{spanned by $V' \seq V$}, i.e., it contains \emph{all} edges of $E$
whose endpoints are in $V'$.
\end{definition}

\subsection{Concept Classes and VC-Dimension} \label{subsec:concept-classes}

As usual a \emph{concept over domain $\cX$} is 
a function of the form $C: \cX \ra \{+,-\}$ or, 
equivalently\footnote{$C$ can be identified with $\{x\in\cX: C(x)=+\}$},
a subset of~$\cX$. The elements of $\cX$ are called \emph{instances}. 
If $C(x) = b \in \{+,-\}$, then $C$ is said to \emph{assign label $b$ 
to instance $x$}. A set whose elements are concepts over 
domain $\cX$ is referred to as a \emph{concept class over $\cX$}. 
A pair $(x,b) \in \cX\times\{+,-\}$ is called a 
(positive resp.~negative) example. 
A collection $S \seq \cX \times \{+,-\}$ of examples is called 
a \emph{sample}. A concept is said to be \emph{consistent 
with $S$} if $C(x)=b$ for for all $(x,b) \in S$.
The \emph{version space induced by $S$} is defined as the set 
of all concepts in $\cC$ that are consistent \hbox{with $S$}. 

\begin{description}
\item[General Assumption:] We will assume throughout the paper
that $\cX$ is finite.
\end{description}

\noindent
For the reader's convenience, we now call to mind the definition
of the VC-dimension:

\begin{definition}[Shattered Sets and VC-dimension~\cite{VC1971}]
Suppose that $\cC$ is a concept class over domain $\cX$.
A set $S \seq \cX$ is said to be \emph{shattered} by $\cC$
if, for any $S' \seq S$, there exists a concept $C \in \cC$
such that $S \cap C = S'$. The size of the largest subset
of $\cX$ that is shattered by $\cC$ is called the
\emph{VC-dimension} of $\cC$ and denoted by $\VCD(\cC)$.
\end{definition}

\noindent
The following holds for trivial reasons:
\begin{example}[VC-Dimension of the Powerset] \label{ex:powerset-vcd}
$\VCD(2^\cX) = |\cX|$.
\end{example}

Sometimes the VC-dimension of a concept class $\cC$ can be
easily expressed in terms of the VC-dimension of some subclasses:

\begin{remark}[Common Sense] \label{rem:max-vcd}
Let $\cC_1,\ldots,\cC_r$ be concept classes over pairwise disjoint
domains $\cX_1,\ldots,\cX_r$. Then, for $\cC = \cC_1 \cup\ldots\cup \cC_r$,
the following holds:
\begin{equation} \label{eq:max-vcd}
\VCD(\cC) = \max_{i=1,\ldots,r}\VCD(\cC_i) \enspace .
\end{equation}
\end{remark}

\begin{proof}
Equation~(\ref{eq:max-vcd}) holds simply because a shattered set 
cannot have a non-empty intersection with two or more of the 
domains $\cX_1,\ldots,\cX_r$. 
\end{proof}

\noindent
We close this section by noting that a concept class $\cC$ 
of VC-dimension $d$ satisfies
\begin{equation} \label{eq:sauer-vcd}
|\cC| \le \sum_{i=0}^{d}\binom{|\cX|}{i} \enspace ,
\end{equation}
where $\cX$ denotes the underlying domain.
This is a direct consequence of the well-known
Sauer-Shelah lemma~\cite{S1972,SH1972}.

\subsection{Teaching Models} \label{subsec:teaching-models}

Here is the formal definition of the original model of teaching dimension, 
as introduced in~\cite{SM1991,GK1995}: 

\begin{definition}[SM-GK-Model~\cite{SM1991,GK1995}] \label{def:gk-model}
Let $\cC$ be a concept class over $\cX$.  
A \emph{teaching set for $C \in \cC$} is a subset $D \seq \cX$
which distinguishes $C$ from any other concept in $\cC$, i.e.,
for each $C' \in \cC\sm\{C\}$, there exists an instance $x \in D$
such that $C(x) \neq C'(x)$. The size of the smallest teaching set
for $C \in \cC$ is denoted by $\TD(C,\cC)$. The \emph{teaching 
dimension of $\cC$ in the SM-GK-model of teaching} is then given by
\[ 
\TD(\cC) = \max_{C \in \cC} \TD(C,\cC) \enspace . 
\]
A related quantity is
\[ 
\TD_{min}(\cC) = \min_{C \in \cC} \TD(C,\cC) \enspace . 
\]
An \emph{SM-GK-teacher for $\cC$} is a mapping $T: \cC \ra 2^\cX$ 
such that, for each $C \in \cC$,
the set $T(C) \seq \cX$ is a teaching set for $C \in \cC$.
The \emph{order of $T$} is defined 
as $\order(T) = \max_{C \in \cC}|T(C)|$. 
\end{definition}

\noindent
With a mapping $T: \cC \ra 2^\cX$ and a concept $C \in \cC$, 
we associate the sample 
\[ S_{T,C} := \{(x,C(x)): x \in T(C)\} \enspace . \]
Obviously the following assertions are equivalent:
\begin{enumerate}
\item
$T$ is an SM-GK-teacher for $\cC$.
\item
For each concept $C \in \cC$: $C$ is the only concept that is 
consistent with $S_{T,C}$.
\item
For each $C \in \cC$: the version space induced by the sample $S_{T,C}$
consists of $C$ only.
\end{enumerate}
Intuitively, $S_{T,C}$ is what the teacher would present to
a learner. The learner can then infer the corresponding version 
space.

\begin{description}
\item[Warning:] 
We will occasionally be sloppy and blur the distinction between $T(C)$
and $S_{T,C}$. For instance, we may say that we add a positive 
(resp.~negative) example to the teaching set of $C$. 
The correct formulation would be: we add an instance $x$ 
such that $C(x)=1$ (resp.~$C(x)=0$) to $T(C)$ so that $(x,(C(x))$ 
is added to $S_{T,C}$.
\end{description}

\smallskip
The recursive teaching model (=RT-model) is among the models
that were suggested in~\cite{ZLHZ2011} as an alternative
to the classical SM-GK-model. Here is the definition:

\begin{definition}[RT-Model~\cite{ZLHZ2011}] \label{def:rt-model}
Let $\cC_{min} \seq \cC$ be the easiest-to-teach concepts in $\cC$, i.e., 
\[
\cC_{min} = \{C \in \cC: \TD(C,\cC) = \TD_{min}(\cC)\} \enspace .
\]
The \emph{RT-dimension of $\cC$} is given by
\[ 
\RTD(\cC) = \left\{ \begin{array}{ll}
              \TD_{min}(\cC) & \mbox{if $\cC = \cC_{min}$} \\
              \max\{\TD_{min}(\cC) , \RTD(\cC\sm\cC_{min})\} &
              \mbox{otherwise}
            \end{array} \right.
\enspace . 
\]  
\end{definition}

\smallskip
There is still one more teaching model, introduced in~\cite{GRSZ2017},
that we will need for technical reasons. It  brings into play a so-called 
preference relation. The idea behind this is to release the teaching set
for a concept $C \in \cC$ from the obligation to distinguish $C$ from 
concepts having a lower preference than $C$ itself. Here is the formal 
definition:

\begin{definition}[PBT-Model~\cite{GRSZ2017}] \label{def:pbt-model}
A strict partial ordering $\prec$ on $\cC$ is called a
\emph{preference relation on $\cC$}. We say that 
\emph{concept $C \in \cC$ is preferred over concept $C' \in \cC$
with respect to $\prec$} if $C' \prec C$. A \emph{teaching set
for $C \in \cC$ with respect to $\prec$} is a subset $D \seq \cX$
which distinguishes $C$ from any any other concept $C' \in \cC$ 
except possibly the ones having a lower preference than $C$.
In other words, for each concept $C' \in \cC\sm\{C\}$ 
that does not satisfy $C' \prec C$ (so that either $C \prec C'$
or $C$ and $C'$ are incomparable), there exists an instance $x \in D$ 
such that $C(x) \neq C'(x)$. The size of a smallest teaching set 
for $C \in \cC$ with respect to $\prec$ is denoted by $\TD_\prec(C,\cC)$.
The \emph{teaching dimension of $\cC$ with respect to $\prec$}
is then given by 
\[ \TD_\prec(\cC) = \max_{C \in \cC}\TD_\prec(C,\cC) \enspace . \]
The \emph{PBT-dimension\footnote{PBT = Preference-Based Teaching} 
of $\cC$} is given by
\[ \PBTD(\cC) = \min_\prec \TD_\prec(\cC) \enspace , \]
where $\prec$ ranges over all preference relations on $\cC$.
A \emph{PB-teacher for $\cC$} is a pair $(T,\prec)$ where $\prec$
is a preference relation on $\cC$ and $T$ is a mapping $T: \cC \ra 2^\cX$ 
such that, for each $C \in \cC$, the set $T(C) \seq \cX$ is a teaching set 
for $C$ with respect to $\prec$. The \emph{order of $T$} is defined 
as $\order(T) = \max_{C \in \cC}|T(C)|$. 
\end{definition}

\noindent
We briefly mention two quite natural choices for $\prec$:
 
\begin{definition}[Subset- and Superset-Preferences]
Let $\cC_0$ be a subclass of $\cC$. We say that we have 
\emph{subset-preferences (resp.~superset-preferences) on $\cC_0$} if, 
for all $C,C' \in \cC_0$, the inclusion $C \subset C'$
implies that $C$ is preferred over $C'$ (resp.~that $C'$ is preferred
over $C$).
\end{definition}

\noindent
Obviously the following holds:

\begin{remark} \label{rem:order-teacher}
If $(T,\prec)$ is a PB-teacher for $\cC$, 
then $\PBTD(\cC) \le \TD_{\prec}(\cC) \le \order(T)$.
Moreover, for each $T: \cC \ra 2^\cX$ and each preference 
relation $\prec$ on $\cC$, the following assertions are equivalent:
\begin{enumerate}
\item
$(T,\prec)$ is a PB-teacher for $\cC$.
\item
For each $C \in \cC$: $C$ is the unique most preferred
(with respect to $\prec$) concept in the version space 
induced by $S_{T,C}$.
\end{enumerate}
\end{remark}

\medskip\noindent
The above models are related as follows:
\begin{enumerate}
\item
It was shown in~\cite{DFSZ2014} that 
\begin{equation} \label{eq:rtd-tdmin}
\RTD(\cC) = \max_{\cC_0 \seq \cC}\TD_{min}(\cC_0) \enspace , 
\end{equation}
holds for every concept class $\cC$. This implies that, for every concept class $\cC$, 
we have $\RTD(\cC) \ge \TD_{min}(\cC)$.
\item
It was shown in~\cite{GRSZ2017} that $\RTD$ and $\PBTD$
coincide on finite concept classes. 
\end{enumerate}

\begin{example} [RT-Dimension of the Powerset] \label{ex:powerset-rtd}
It is obvious that $\RTD(2^\cX) \le |\cX|$ and $\TD_{min}(2^\cX) = |\cX|$.
Since $\RTD$ is lower-bounded by $\TD_{min}$, 
we have $\RTD(2^\cX) \ge |\cX|$ and, therefore, $\RTD(2^\cX) = |\cX|$.
\end{example}

\noindent
Here is a weak pendant to Remark~\ref{rem:max-vcd}:

\begin{remark}[Common Sense] \label{rem:max-rtd}
Let $\cC_1,\ldots,\cC_r$ be concept classes over pairwise disjoint
domains $\cX_1,\ldots,\cX_r$. Then, for $\cC = \cC_1 \cup\ldots\cup \cC_r$,
the following holds:
\begin{equation} \label{eq:max-rtd} 
\max_{i=1,\ldots,r}\RTD(\cC_i) \le \RTD(\cC) \le 1+\max_{i=1,\ldots,r}\RTD(\cC_i)
\enspace . 
\end{equation}
\end{remark}

\begin{proof}
The first inequality holds for reasons of monotonicity.  The second-one 
holds because we can always add a positive example to the teaching set 
of a non-empty concept in $\cC$ (which is necessary only when the original 
teaching set consisted of negative examples only). Moreover, if $\eset \in \cC$,
we may consider $\eset$ as the most preferred concept in $\cC$.
We omit the (more or less) straightforward details.
\end{proof}

\noindent
Here comes a kind of RTD-version of the Sauer-Shelah Lemma:

\begin{remark}[\cite{SSYZ2014}] \label{rem:sauer-rtd}
The upper bound on $|\cC|$ that is shown in~(\ref{eq:sauer-vcd}) is valid
also for $d = \RTD(\cC)$. 
\end{remark}

\noindent
This gives a weapon for lower-bounding the $\RTD$:
\begin{equation} \label{eq:sauer-rtd}
|\cC| > \sum_{i=0}^{d}\binom{|\cX|}{i} \impl \RTD(\cC)>d \enspace .
\end{equation}

\noindent
In the course of the paper, we will proceed as follows:
\begin{itemize}
\item
We will often use equation~(\ref{eq:rtd-tdmin}) for deriving a lower bound 
on $\RTD(\cC)$ by choosing a hard-to-teach subclass $\cC_0 \seq \cC$.
Occasionally, a lower bound will be directly obtained 
from~(\ref{eq:sauer-rtd}).
\item
Because the RTD-conjecture is less known under the name PBTD-conjecture, 
we will state the results in terms of the $\RTD$. On the other hand,
we derive upper bounds on the $\RTD$ by switching to the 
equivalent\footnote{for finite concept classes} PBT-setting. 
More precisely, we obtain an upper bound on $\RTD(\cC)$ by designing 
a PB-teacher $(T,\prec)$ and by exploiting the facts 
that $\order(T)$ upper-bounds $\PBTD(\cC)$ and $\PBTD(\cC) = \RTD(\cC)$. 
\end{itemize}

\section{Teaching Complexity of Classes Defined by Graphs}
\label{sec:graph-induced-classes}

\noindent
As it was done in~\cite{KKRUW1997}, we associate 
the following concept classes with a given graph $G = (V,E)$: 
\begin{eqnarray*}
\cC_{star}(G) & = & \{X \cup \{x\}:\ x \in V , X \seq N_G^o(x)\} \\
\cC_{con}(G) & = & 
\{X \seq V:\mbox{$G$ has a subgraph that is a tree with vertex set $X$}\} \\
& = &\{X \seq V:\mbox{$X \neq \eset$ and $\Span{X}_G$ is connected}\}
\end{eqnarray*}
We will refer to sets in $\cC_{con}(G)$ as \emph{connected sets (in $G$)}.
We would like to stress that the empty set is considered as connected,
i.e., $\eset \in \cC_{con}(G)$ for every graph $G$.

The main goal in this section is to show that, for every graph $G$
and $\cC$ being either $\cC_{star}(G)$ or $\cC_{con}(G)$, the
RT-dimension and the VC-dimension of $\cC$ differ by at most $1$ and 
the RT-dimension of $\cC$ is upper-bounded by the VC-dimension of $\cC$.

Throughout this section, the symbols $G$, $V$ and $E$ 
will denote a graph $G$ with vertex set $V$ and edge set $E$.
Note that $V$ can also be viewed as the set of instances of 
either $\cC_{star}(G)$ or $\cC_{con}(G)$.

\subsection{Stars as Concepts} \label{subsec:stars-concepts}

As usual $\Delta(G) = \max_{x \in V}\deg_G(x)$ denotes the maximum
vertex-degree in $G$. Note that, for any $x \in V$,
the set $N_G^o(x)$ is shattered by $\cC_{star}(G)$. On the other hand,
if $S \seq V$ is shattered by $\cC_{star}(G)$, then $S \seq N_G(x)$
for some vertex $x \in X$ (because there must be a concept in $\cC_{star}(G)$
that assigns label $+$ to all vertices in $S$). Hence the
following holds:

\begin{theorem}[\cite{KKRUW1997}] \label{th:star-vcd-bounds}
$\Delta(G) \le \VCD(\cC_{star}(G)) \le \Delta(G)+1$.
\end{theorem}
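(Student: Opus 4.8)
The plan is to establish the two inequalities separately, since they rest on essentially different observations that the authors have already foreshadowed in the paragraph immediately preceding the statement.

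\medskip\noindent\textbf{Lower bound.} First I would argue that $\VCD(\cC_{star}(G)) \ge \Delta(G)$. Let $x \in V$ be a vertex of maximum degree, so $|N_G^o(x)| = \Delta(G)$. I claim that $N_G^o(x)$ is shattered by $\cC_{star}(G)$. Indeed, take any subset $X \seq N_G^o(x)$. By definition of $\cC_{star}(G)$, the set $X \cup \{x\}$ is a concept in the class (since $X \seq N_G^o(x)$). Intersecting this concept with $N_G^o(x)$ recovers exactly $X$, because $x \notin N_G^o(x)$. Hence every subset of $N_G^o(x)$ is realized, so $N_G^o(x)$ is shattered and $\VCD(\cC_{star}(G)) \ge |N_G^o(x)| = \Delta(G)$. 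This is the easy direction and matches the first remark in the preamble verbatim.

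\medskip\noindent\textbf{Upper bound.} Next I would show $\VCD(\cC_{star}(G)) \le \Delta(G)+1$. Suppose $S \seq V$ is shattered by $\cC_{star}(G)$. Since $S$ itself must be realized as a positive intersection, there is a concept $C = X \cup \{c\} \in \cC_{star}(G)$ (with $c \in V$ and $X \seq N_G^o(c)$) such that $S \cap C = S$, i.e.\ $S \seq C$. Because $C = X \cup \{c\} \seq N_G^o(c) \cup \{c\} = N_G(c)$, we conclude $S \seq N_G(c)$. Every star concept is contained in some closed neighborhood $N_G(c)$, and the instances of such a concept lying outside $\{c\}$ form a subset of $N_G^o(c)$, which has at most $\Delta(G)$ elements. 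Therefore $|S| \le |N_G(c)| = |N_G^o(c)| + 1 \le \Delta(G) + 1$, giving the desired bound.

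\medskip\noindent I do not expect a genuine obstacle here; both inclusions are direct unwindings of the definition of $\cC_{star}(G)$, and the authors have already sketched both halves in the sentences preceding the theorem. The only point requiring a little care is the upper bound: one must use that a shattered set is in particular \emph{positively} realizable (to pin down the center $c$ whose closed neighborhood contains $S$), and then separate the center from the leaves to pass from $|N_G(c)|$ to $\Delta(G)+1$. A minor subtlety worth noting is that the bound can be loose by exactly $1$ precisely when a shattered set can include the center vertex $c$ together with $\Delta(G)$ of its neighbors; whether this extra vertex is actually achievable for a given $G$ is exactly what determines if $\VCD(\cC_{star}(G))$ equals $\Delta(G)$ or $\Delta(G)+1$, but the theorem as stated only asks for the two-sided enclosure, so no such case analysis is needed.
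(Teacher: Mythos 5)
Your proof is correct and follows essentially the same route as the paper: the lower bound by shattering the open neighborhood $N_G^o(x)$ of a maximum-degree vertex via the concepts $X \cup \{x\}$, and the upper bound by noting that a shattered set $S$ must itself be positively realized, hence contained in some closed neighborhood $N_G(c)$ of size at most $\Delta(G)+1$. The paper gives exactly this argument (informally, in the paragraph preceding the theorem statement), so there is nothing to add.
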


\noindent
A similar result holds with RTD at the place of VCD:

\begin{theorem} \label{th:star-rtd}
$\Delta(G) \le \RTD(\cC_{star}(G)) \le \Delta(G)+1$.
\end{theorem}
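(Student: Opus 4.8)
The plan is to establish the two inequalities separately, mirroring the structure of Theorem~\ref{th:star-vcd-bounds}. For the lower bound $\Delta(G) \le \RTD(\cC_{star}(G))$, I would exploit the already-noted fact that for any vertex $x$ with $\deg_G(x) = \Delta(G)$, the open neighborhood $N_G^o(x)$ is shattered by $\cC_{star}(G)$. This means $\cC_{star}(G)$ contains, as a subclass, the full powerset of the $\Delta(G)$-element set $N_G^o(x)$ (each subset $X \seq N_G^o(x)$ yields the concept $X \cup \{x\}$). I would then argue via equation~(\ref{eq:rtd-tdmin}) by choosing a suitable hard-to-teach subclass $\cC_0$: restricting attention to the concepts of the form $X \cup \{x\}$ with $X \seq N_G^o(x)$ gives a copy of $2^{N_G^o(x)}$, and since $\RTD(2^\cX) = |\cX|$ by Example~\ref{ex:powerset-rtd}, I would want to conclude $\RTD(\cC_{star}(G)) \ge \Delta(G)$. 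The delicate point is that $\cC_0$ lives over the domain $N_G(x)$ rather than $N_G^o(x)$, so one must check that the vertex $x$ (which is positive in every concept of $\cC_0$) does not cheapen teaching; since $x$ is labeled $+$ uniformly across $\cC_0$, it carries no distinguishing information within $\cC_0$, so $\TD_{min}(\cC_0) = \Delta(G)$ as desired.

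For the upper bound $\RTD(\cC_{star}(G)) \le \Delta(G)+1$, following the strategy announced at the end of Section~\ref{subsec:teaching-models}, I would switch to the PBT-setting and design a PB-teacher $(T,\prec)$ of order at most $\Delta(G)+1$, then invoke $\RTD(\cC_{star}(G)) = \PBTD(\cC_{star}(G)) \le \order(T)$. The natural preference relation to try is subset-preferences: smaller stars are preferred. For a concept $C = X \cup \{x\}$, the teaching set should pin down both the center $x$ and the set $X$ of leaves. The intended teaching set would consist of the positive examples $X \cup \{x\}$ themselves together with possibly one extra negative example to fix the center; the total size is $|X| + 1 \le \Delta(G) + 1$ since $|X| \le \deg_G(x) \le \Delta(G)$. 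By Remark~\ref{rem:order-teacher}, it suffices to verify that $C$ is the unique most-preferred concept in the version space induced by $S_{T,C}$.

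The main obstacle, and the step requiring the most care, is verifying this uniqueness-of-most-preferred condition, which amounts to disentangling the ambiguity in the representation of a star. A single vertex set can be a star centered at different vertices, and a concept $X \cup \{x\}$ presented with all its points positive could also be read as a star with a different center. The subtlety is captured by the Remark following Definition~\ref{def:neighborhood}, namely $N_G^o(x) \neq N_G(y)$ for all $x,y$: this guarantees that the open-versus-closed neighborhood distinction always leaves a witnessing instance, which is precisely what lets a small teaching set disambiguate the center. I would handle the remaining cases by checking that any competing star $C'$ consistent with the positive examples either properly contains $C$ (hence is less preferred under subset-preferences, so causes no conflict) or else disagrees with $C$ on one of the chosen examples.

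I expect that proving the upper bound is the harder direction: one must argue both that the center is determined and that no other consistent star is equally or more preferred, and the interplay between the two possible values $\Delta(G)$ and $\Delta(G)+1$ will hinge on exactly when the single center-fixing negative example can be dispensed with. The lower bound, by contrast, should follow cleanly from the powerset embedding and equation~(\ref{eq:rtd-tdmin}).
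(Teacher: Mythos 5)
Your lower bound is sound and is essentially the paper's own argument: restrict to $\cC_0 = \{X \cup \{x_0\}: X \seq N_G^o(x_0)\}$ for a vertex $x_0$ of maximum degree, note that instances outside $N_G^o(x_0)$ (including $x_0$ itself) carry no distinguishing information within $\cC_0$, conclude $\TD_{min}(\cC_0) = |N_G^o(x_0)| = \Delta(G)$, and invoke~(\ref{eq:rtd-tdmin}) (the paper phrases this as monotonicity of $\RTD$ plus Example~\ref{ex:powerset-rtd}, which amounts to the same thing).

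The upper bound is where your proposal has a genuine gap, and it is conceptual: you treat a star as a pair (center, leaf set) that the teacher must pin down, but a concept in $\cC_{star}(G)$ is just a subset of $V$. If $X \cup \{x\} = Y \cup \{y\}$ as sets, these are the \emph{same} concept; the version space, the preference relation, and the success criterion of Remark~\ref{rem:order-teacher} all operate on sets. Hence the verification you call ``the main obstacle'' is immediate: teach $C$ by presenting all elements of $C$ as positive examples; every concept consistent with this sample contains $C$; every such concept other than $C$ is a \emph{proper} superset and so is strictly less preferred under subset-preferences; therefore $C$ is the unique most preferred concept in its version space. The center never enters, and the Remark that $N_G^o(x) \neq N_G(y)$ plays no role. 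This two-line argument, with order $\max_{C \in \cC_{star}(G)}|C| = \Delta(G)+1$, is the paper's entire upper-bound proof. Your ``possibly one extra negative example to fix the center'' is not merely superfluous --- it breaks your own size accounting: the positive examples alone already number $|X|+1$, so adding a negative example gives $|X|+2$, which for a full star ($|X| = \Delta(G)$) exceeds $\Delta(G)+1$. As written you assert a teaching-set size of $|X|+1$ while describing a set of possibly $|X|+2$ elements; the fix is to delete the negative example and the center-identification concern entirely. Relatedly, your closing speculation that the dichotomy between $\Delta(G)$ and $\Delta(G)+1$ hinges on when the center-fixing example can be dropped is off target: that finer question is settled later by Theorems~\ref{th:star-vcd-exact} and~\ref{th:star-rtd-vcd}, by a different (and genuinely harder) argument involving the equivalence classes of maximum-degree vertices.
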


\begin{proof}
Pick a vertex $x_0 \in V$ of degree $\Delta(G)$. 
Then $\cC_0 = \{ X \cup \{x_0\}:\ X \seq N_G^o(x_0) \}$ is a subclass
of $\cC_{star}(G)$ over domain $V$. The instances outside $N_G^o(x_0)$
do not distinguish between distinct concepts in $\cC_0$ and are therefore
useless for the purpose of teaching. But $\cC_0$ restricted to
subdomain $N_G^o(x_0)$ coincides with the powerset of $N_G^o(x_0)$.
It follows that 
\[
\RTD(\cC_{star}(G)) \ge \RTD(\cC_0) = |N_G^o(x_0)| = \deg_G(x_0) = 
\Delta(G) \enspace ,
\]
where the first inequality follows from a monotonicity argument.
As for the upper bound on $\RTD(\cC_{star}(G)) = \PBTD(\cC_{star}(G))$, 
we argue as follows. Assuming subset-preferences, each concept can be 
taught by presenting all its elements as positive examples. 
Hence $\RTD(\cC_{star}(G)) \le \max\{|C|: C \in \cC_{star}(G)\} = 1+\Delta(G)$, 
which concludes the proof.
\end{proof}

Let $V_{max} = \{x \in V: \deg_G(x) = \Delta(G)\}$ be the set of
vertices of maximum degree in $G$. We consider the following 
equivalence relation on $V_{max}$:
\[ x \equiv y \dund N_G(X) = N_G(y) \enspace . \]
In other words, two vertices of maximum degree are considered 
as equivalent iff they have the same closed neighborhood in $G$.
Note that the equivalence $x \equiv y$ of two distinct 
vertices $x \neq y \in V_{max}$ implies that $x$ and $y$ are adjacent. 
Let $V_{max} = V_1 \cup\ldots\cup V_m$ denote
the partition of $V_{max}$ into equivalence classes.
For $i=1,\ldots,m$, let $N_i \seq V$ be the common closed neighborhood
of the vertices in $V_i$. Set $V'_i = N_i \sm V_i$. The following 
observations are rather obvious: 
\begin{itemize}
\item $|N_i| = \Delta(G)+1$.
\item 
$V_i \seq N_i$ so that $N_i = V_i \cup V'_i$ is a partition of $N_i$.
\item
The vertices in $V_i$ form a clique in $G$. 
\item
The condition 
\begin{equation} \label{eq:big-vcd}
V'_i \seq N_G(v) 
\end{equation}
is satisfied by each $v \in V_i$ but violated 
by each $v \in V'_i$.\footnote{A vertex $v \in N_i$ with $V'_i$ in its
closed neighborhood is easily seen to have $N_i$ in its closed
neighborhood. Becsue $|N_i| = \Delta(G)+1$, we even have $N_G(v) = N_i$.
Thus $v$ must be a vertex in $V_i$ (and does therefore not belong to $V'_i$).}
\end{itemize}

\begin{theorem} \label{th:star-vcd-exact}
With the above notation, the following holds:
$\VCD(\cC_{star}(G)) = \Delta(G)+1$ iff there exists
an index $i \in [m]$ and a vertex $v \notin N_i$
that satisfies Condition~(\ref{eq:big-vcd}).
\end{theorem}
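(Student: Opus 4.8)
The plan is to pin down exactly which sets of size $\Delta(G)+1$ can be shattered and to show that the only candidates are the sets $N_i$. First I would observe that, since $\VCD(\cC_{star}(G)) \le \Delta(G)+1$ by Theorem~\ref{th:star-vcd-bounds} and the VC-dimension is the size of the largest shattered set, we have $\VCD(\cC_{star}(G)) = \Delta(G)+1$ iff some set $S$ with $|S| = \Delta(G)+1$ is shattered. If such an $S$ is shattered, then in particular the subset $S' = S$ must be realized by some concept $C$, forcing $S \seq C \seq N_G(x)$ where $x$ is the center of the star $C$; since $|N_G(x)| \le \Delta(G)+1$, equality holds throughout, so $x$ has maximum degree and $S = N_G(x) = N_i$ for the class $V_i$ containing $x$. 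Hence it suffices to decide, for each $i \in [m]$, whether $N_i$ itself is shattered.

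Next I would analyze which subsets $S' \seq N_i$ are realizable, splitting $S'$ into its parts in $V_i$ and in $V'_i$. The subsets meeting $V_i$ are free of charge: picking any center $c \in S' \cap V_i$ gives $N_G(c) = N_i \supseteq S'$, so the star with center $c$ and leaf set $S'\sm\{c\}$ is a legal concept and realizes $S'$. Thus shattering $N_i$ reduces to realizing every $S' \seq V'_i$. Here the empty set needs a vertex outside $N_i$, while a nonempty $S' \seq V'_i$ needs a center $c$ adjacent to all of $S'$, i.e. with $S' \seq N_G(c)$, where $c$ lies either in $V'_i$ (and then $c \in S'$) or outside $N_i$. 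The crux is the single hardest subset, namely $S' = V'_i$ itself.

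The main step, and the place where Condition~(\ref{eq:big-vcd}) enters, is to show that $V'_i$ can be realized iff there is a vertex $v \notin N_i$ with $V'_i \seq N_G(v)$; the hard direction is the forward one. A center in $V_i$ is impossible, since it would place a $V_i$-vertex into $S' = V'_i$, and a center $c \in V'_i$ is impossible by the observation recorded after Condition~(\ref{eq:big-vcd}): every $v \in V'_i$ violates $V'_i \seq N_G(v)$. I would reprove this observation by the clique argument: if $v \in N_i$ had $V'_i \seq N_G(v)$, then every $u \in V_i$ satisfies $v \in N_i = N_G(u)$, hence $u \in N_G(v)$, so $N_i = V_i \cup V'_i \seq N_G(v)$, and $|N_i| = \Delta(G)+1$ forces $N_G(v) = N_i$, whence $v \equiv u$ and $v \in V_i$, contradicting $v \in V'_i$. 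Therefore realizing $V'_i$ forces a center $c \notin N_i$ with $V'_i \seq N_G(c)$, which is precisely the asserted condition.

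Finally, the converse directions are routine and I would dispatch them quickly. If some $v \notin N_i$ satisfies $V'_i \seq N_G(v)$, then the singleton star $\{v\}$ realizes $\eset$ and, for any nonempty $S' \seq V'_i \seq N_G^o(v)$, the star with center $v$ and leaf set $S'$ realizes $S'$; together with the free subsets meeting $V_i$ this shatters $N_i$, giving $\VCD(\cC_{star}(G)) = \Delta(G)+1$. Conversely, if $\VCD(\cC_{star}(G)) = \Delta(G)+1$, then some $N_i$ is shattered, so $V'_i$ in particular is realizable, and the forward argument above produces the required external vertex. The only genuine obstacle is the forward step of the main equivalence; everything else is bookkeeping of stars and their centers.
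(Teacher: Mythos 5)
Your proposal is correct and follows essentially the same route as the paper: both identify the candidate shattered sets of size $\Delta(G)+1$ as exactly the sets $N_i$, split the realization of subsets $S' \seq N_i$ according to whether $S'$ meets $V_i$ (free, via a center in $S' \cap V_i$) or lies in $V'_i$ (handled by the external vertex $v$), and extract the condition $V'_i \seq N_G(v)$ with $v \notin N_i$ from the hardest trace $S' = V'_i$, using the clique/size argument to rule out centers inside $N_i$. The only difference is presentational: you make explicit the reduction to the single trace $V'_i$ and re-derive the paper's footnote observation, which the paper simply cites.
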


\begin{proof}
We start with the if-direction. Since $|N_i| = \Delta(G)+1$,
it suffices to show that $N_i$ is shattered by $\cC_{star}(G)$. 
Pick a subset $S \seq N_i$ of $N_i$. We proceed by case analysis:
\begin{description}
\item[Case 1:] $S$ contains an element $x$ from $V_i$. \\ 
Then, writing $S = \{x\} \cup (S \sm \{x\})$, we see that $S$ is a 
member of $\cC_{star}(G)$.  Moreover, since $S \seq N_i$, we 
have $S = S \cap N_i$.
\item[Case 2:] $S \cap V_i = \eset$. \\
Then $S \seq V'_i$ and the concept $\{v\} \cup S$ is a member
of $\cC_{star}(G)$. Moreover, since $v \notin N_i$, we 
have $(\{v\} \cup S) \cap N_i = S$.
\end{description}
This discussion shows that $N_i$ is shattered. \\
We continue with the only-if direction. Pick a set $S$ of 
size $\Delta(G)+1$ that is shattered. There must exist a concept
in $\cC_{star}(G)$ which contains $S$ resp.~even equals $S$
(since no concept in $\cC_{star}(G)$ has a size greater 
than $\Delta(G)+1$). This is possible only if $S$ is the closed
neighborhood of some vertex of maximum degree in $G$, i.e.,
if $S = N_i$ for some $i \in [m]$. There must exist a concept
in $\cC_{star}(G)$ whose intersection with $N_i$ equals $V'_i$.
Thus there must exist a vertex $v \notin V_i$ which satisfies 
the condition~(\ref{eq:big-vcd}). As already observed above,
a vertex satisfying this condition cannot belong to $V'_i$.
Hence $v \notin N_i$, which completes the proof.
\end{proof}

\begin{theorem} \label{th:star-rtd-vcd}
For any graph $G$: $\RTD(\cC_{star}(G)) \le \VCD(\cC_{star}(G))$.
\end{theorem}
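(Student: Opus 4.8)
The plan is to first collapse the problem to a single case. By Theorem~\ref{th:star-vcd-bounds} and Theorem~\ref{th:star-rtd}, both $\VCD(\cC_{star}(G))$ and $\RTD(\cC_{star}(G))$ lie in $\{\Delta(G),\Delta(G)+1\}$, and moreover $\RTD(\cC_{star}(G)) \le \Delta(G)+1$. Hence the desired inequality can only fail when $\VCD(\cC_{star}(G)) = \Delta(G)$ while $\RTD(\cC_{star}(G)) = \Delta(G)+1$. So from now on I would assume $\VCD(\cC_{star}(G)) = \Delta(G)$ and prove the matching bound $\RTD(\cC_{star}(G)) \le \Delta(G)$.

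The key structural tool comes from Theorem~\ref{th:star-vcd-exact}: since $\VCD(\cC_{star}(G)) = \Delta(G)$, no vertex $v \notin N_i$ satisfies Condition~(\ref{eq:big-vcd}), and combined with the footnote observation (a vertex whose closed neighborhood contains $V'_i$ must lie in $V_i$) this yields the clean equivalence $V'_i \seq N_G(v) \dund v \in V_i$. Its teaching-theoretic consequence is decisive: \emph{presenting all of $V'_i$ as positive examples forces every consistent concept to be a subset of $N_i$ centered in $V_i$}, i.e.\ an ``$N_i$-petal'' $V'_i \cup Y$ with $Y \seq V_i$.

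To bound the $\RTD$ I would use equation~(\ref{eq:rtd-tdmin}) in the upper-bound direction and show $\TD_{min}(\cC_0) \le \Delta(G)$ for \emph{every} subclass $\cC_0 \seq \cC_{star}(G)$; that is, I would exhibit in each $\cC_0$ one concept teachable from $\le \Delta(G)$ examples. Pick a concept $C$ of maximum size in $\cC_0$. If $|C| \le \Delta(G)$, present $C$ entirely as positive examples: since $C$ is $\seq$-maximal in $\cC_0$, the version space collapses to $\{C\}$, so $\TD(C,\cC_0) \le |C| \le \Delta(G)$. Otherwise $|C| = \Delta(G)+1$, forcing $C = N_i$ for some $i$. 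Now present $V'_i$ as positive; by the equivalence above the surviving version space is exactly the set of petals present in $\cC_0$, indexed by $\mathcal Y = \{Y \seq V_i : V'_i \cup Y \in \cC_0\} \ni V_i$. Distinguishing petals uses only $V_i$-coordinates, so this reduces to teaching inside the family $\mathcal Y \seq 2^{V_i}$. Here I would invoke connectivity of the Boolean cube on $V_i$: whenever $\mathcal Y \neq 2^{V_i}$, some $Y^\ast \in \mathcal Y$ has a cube-neighbor outside $\mathcal Y$, and then $Y^\ast$ can be singled out within $\mathcal Y$ using only $|V_i|-1$ coordinates. Presenting $V'_i$ together with these coordinates isolates $V'_i \cup Y^\ast$ in $\cC_0$ from $|V'_i| + (|V_i|-1) = \Delta(G)$ examples.

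The main obstacle is precisely the complementary case $\mathcal Y = 2^{V_i}$ (the ``full sunflower''). Here $V_i$ is shattered, which only tells us $|V_i| \le \Delta(G)$, and the cube argument yields no saving, since teaching any petal inside the full powerset needs all $|V_i|$ coordinates and the naive total becomes $|V'_i| + |V_i| = \Delta(G)+1$. My plan to break this is to \emph{avoid} presenting all of $V'_i$: teach a petal from its $V_i$-coordinates together with an almost-complete block $V'_i \sm \{z_0\}$ of positives, noting that by the equivalence every non-petal consistent competitor must omit some vertex of $V'_i$, so the only competitor this fails to kill is one that agrees with the petal except for the single missing vertex $z_0$. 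The real work — which I expect to be the hard part — is to choose the petal and the discarded vertex $z_0$ so that this lone competitor, together with any competitor extending \emph{outside} $N_i$ and centered at a vertex of $V'_i$, is absent from $\cC_0$; I would attempt a peeling induction on $V'_i$, arguing that persistent failure at every level would force all $2^{|N_i|}$ traces on $N_i$ to occur in $\cC_0$, thereby shattering the $(\Delta(G)+1)$-element set $N_i$ and contradicting $\VCD(\cC_{star}(G)) = \Delta(G)$.
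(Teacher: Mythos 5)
Your reduction to the case $\VCD(\cC_{star}(G))=\Delta(G)$, your use of the equivalence $V'_i \seq N_G(v) \dund v \in V_i$, and your two cases via~(\ref{eq:rtd-tdmin}) are all sound. The one thing to fix is that the ``main obstacle'' you set up at the end is a phantom: the case ${\mathcal Y} = 2^{V_i}$ can never occur. Indeed, ${\mathcal Y} = 2^{V_i}$ would require $\eset \in {\mathcal Y}$, i.e.\ $V'_i = V'_i \cup \eset \in \cC_0 \seq \cC_{star}(G)$. But a star concept equal to $V'_i$ would have its center $x$ inside $V'_i$ and would satisfy $V'_i \seq N_G(x)$, i.e.\ Condition~(\ref{eq:big-vcd}); by the footnote to that condition (and by your own equivalence, since $x \in V'_i$ means $x \notin V_i$) no vertex of $V'_i$ can satisfy it. Hence $\eset \notin {\mathcal Y}$, so ${\mathcal Y}$ is always a proper subset of $2^{V_i}$ that is non-empty (it contains $V_i$), and your cube-connectivity argument applies unconditionally: some $Y^\ast \in {\mathcal Y}$ has a neighbor in the cube, differing from $Y^\ast$ in a single $w \in V_i$, that lies outside ${\mathcal Y}$, and then $|V'_i| + (|V_i|-1) = \Delta(G)$ examples isolate $V'_i \cup Y^\ast$ within $\cC_0$. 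With this observation your proof is complete, and the entire peeling-induction machinery sketched for the phantom case should simply be deleted.

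The resulting argument is genuinely different from the paper's. The paper works in the preference-based setting and constructs one global PB-teacher: non-special concepts (those containing no $V'_i$) are preferred over special ones and taught by listing all their elements as positive examples under subset-preferences, while a special concept $S \cup V'_i$ (with $\eset \neq S \seq V_i$) is taught by presenting $V'_i$ as positive and $V_i \sm S$ as negative examples, preferring special concepts with more vertices in $V_i$. You instead use the dual characterization~(\ref{eq:rtd-tdmin}) and exhibit, in each subclass $\cC_0$, a single concept teachable from at most $\Delta(G)$ examples: a maximum-size concept when that size is at most $\Delta(G)$, and otherwise a petal of $N_i$ chosen via a boundary edge of ${\mathcal Y}$ in the Boolean cube. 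The paper's route buys an explicit teacher and preference relation for the whole class; your route avoids verifying any global preference relation and concentrates all the work in one elementary fact about the hypercube, but both proofs hinge on exactly the same structural consequence of $\VCD(\cC_{star}(G)) = \Delta(G)$ extracted from Theorem~\ref{th:star-vcd-exact}.
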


\begin{proof}
In view of Theorems~\ref{th:star-vcd-bounds},~\ref{th:star-rtd}
and~\ref{th:star-vcd-exact} it suffices to show 
that $\RTD(\cC_{star}(G)) \le \Delta(G)$ under the following
assumption: for each $i \in [m]$, the condition~(\ref{eq:big-vcd})
is satisfied only by vertices $v \in N_i$ and therefore\footnote{since
this condition is violated by all $v \in V'_i$} only by 
vertices $v \in V_i$. Call a concept in $\cC_{star}(G)$ \emph{special}
if it contains $V'_i$ for some $i \in [m]$. Given the above assumption,
a concept of this kind is necessarily of the form $S \cup V'_i$
for some non-empty $S \seq V_i$. In order to show that $\Delta(G)$ 
upper-bounds $\RTD(\cC_{star}(G)) = \PBTD(\cC_{star}(G))$,
we proceed as follows:
\begin{itemize}
\item 
Non-special concepts are preferred over special ones.
\item 
Among the special concepts, say the ones containing $V'_i$,
we prefer the ones with the largest number of vertices in $V_i$.
\item
On the non-special concepts, we have subset-preferences.
\item
A non-special concept is taught by presenting all its vertices 
as positive examples.
\item
A special concept of the form $S \cup V'_i$ for some $i \in [m]$
and some non-empty $S \seq V_i$ is taught by presenting all vertices 
from $V'_i$ as positive examples and all vertices from $V_i \sm S$
as negative examples.
\end{itemize}
In any case, the concept $C$ that is to be taught is the unique most preferred 
concept in the respective version space. A teaching set for a special concept
containing $V'_i$ has size at most $\Delta(G)$ because $|N_i| = \Delta(G)+1$
and the teaching set is of the form $N_i \sm S$ for some non-empty $S \seq N_i$.
A teaching set for a non-special concept $C$  is also of size at most $\Delta(G)$:
either $C$ is a subset of the closed neighborhood of a vertex of non-maximal
degree or $C$ is of the form $S \cup T \seq N_i$ with $\eset \neq S \seq V_i$
and $T \subset V'_i$ (so that $|C| = |S|+|T| \le |V_i|+|V'_i|-1 \le \Delta(G)$).
\end{proof}

\noindent
The preceding results imply that
\begin{equation} \label{eq:star} 
\Delta(G) \le \RTD(\cC_{star}(G)) \le \VCD(\cC_{star}(G)) \le \Delta(G)+1
\enspace .
\end{equation}
Exactly one of these inequalities must be strict. The following examples
demonstrate that the strict inequality can occur in any of the three 
possible positions:

\begin{examples} \label{ex:star}
The \emph{$(\Delta,\RTD,\VCD)$-triple} of a graph $G$ is defined 
as the triple 
\[ 
(\Delta(G), \RTD(\cC_{star}(G)), \VCD(\cC_{star}(G)))
\enspace .
\]
Let us first consider the complete graph $K_n$, say with vertices $1,\ldots,n$. 
Clearly $\Delta(K_n) = n-1$ and $\cC_{star}(K_n) = 2^{[n]} \sm \{\eset\}$.
It is obvious that this class has VC-dimension $n-1$.
According to~(\ref{eq:star}), the RT-dimension also equals $n-1$.
Thus the $(\Delta,\RTD,\VCD)$-triple of $K_n$ equals $(n-1,n-1,n-1)$. \\
Consider now the path $P_n$ of length $n \ge 2$ with vertices $1,\ldots,n$
(in this order). Clearly $\Delta(P_n) = 2$ and $\cC_{star}(P_n)$
consists of all sets of $1$, $2$ or $3$ consecutive vertices on $P_n$.
It is obvious that this class has VC-dimension $2$ and,
again by virtue of~(\ref{eq:star}), the RT-dimension equals $2$ as well.
Thus the $(\Delta,\RTD,\VCD)$-triple of $P_n$ equals $(2,2,2)$. \\
Let $C_4$ be the cycle of length $4$ with vertices $a,b,c,d$ as
shown in the left part of Fig.~\ref{fig:star}. Clearly $\Delta(C_4) = 2$
and $\cC_{star}(C_4) = 2^{\{a,b,c,d\}} \sm \{ \eset,\{a,c\},\{b,d\},\{a,b,c,d\} \}$.
This class has a VC-dimension of at most $3$. Moreover it 
has $16-4 = 12 > 11 = 1 + 4 + \binom{4}{2}$ concepts. 
From~(\ref{eq:sauer-rtd}), we infer that $\RTD(\cC_{star}(C_4)) \ge 3$.
In combination with~(\ref{eq:star}), we may conclude that the
$(\Delta,\RTD,\VCD)$-triple of $C_4$ equals $(2,3,3)$. \\
The $(\Delta,\RTD,\VCD)$-triple of $C_n$ with $n \ge 5$ equals $(2,2,2)$.
The reason for that is that the VC-dimension of $\cC_{star}(C_n)$
equals $2$ for $n \ge 5$. Loosely speaking: from a local perspective 
(by considering only path segments of length at most $3$ on $C_n$),
there is not so much of a difference between $C_n$ and $P_n$. \\
Consider now the graph $G$ with $\Delta(G)=3$ on the right side 
of Fig.~\ref{fig:star}. The vertices $a$ and $b$ have the same closed 
neighborhood, namely $\{a,b,c,d\}$. In accordance with 
Theorem~\ref{th:star-vcd-exact}, the vertex $v$ demonstrates 
that the $\VCD(\cC_{star}(G)) = \Delta(G)+1=4$. An RTD-value 
(= $\PBTD$-value) of $3$ can be achieved as follows:
\begin{itemize}
\item 
Prefer concepts not containing $v$ over concepts containing $v$.
\item
On the concepts containing $v$, we have subset-preferences.
\item
On the concepts not containing $v$, use superset-preferences.
\item
Teach a concept containing $v$ by presenting all its vertices as
positive examples.
\item
Teach a concept $C$ not containing $v$ (and therefore containing at least
one of the vertices $a,b,c,d$) by presenting all vertices 
in $\{a,b,c,d\} \sm C$ as negative examples.
\end{itemize}
We conclude from this discussion that the $(\Delta,\RTD,\VCD)$-triple
of $G$ equals $(3,3,4)$.
\end{examples}

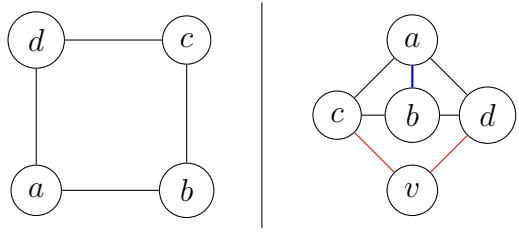
\begin{figure}[hbt]
	\begin{center}
\begin{tikzpicture}


	\tikzset{every node/.style={draw,circle}}

	\def\x{0} \def\y{0}

        \node (a) at (\x,\y-1) {$a$};
        \node (b) at (\x+2,\y-1) {$b$};
        \node (c) at (\x+2,\y+1) {$c$};
	\node (d) at (\x,\y+1) {$d$};

        \draw (a)--(b);
        \draw (b)--(c);
        \draw (c)--(d);
        \draw (d)--(a);

        \draw (\x+3,\y+1.5) -- (\x+3,\y-1.5);

        \def\x{4}
        \node (a) at (\x+1,\y+1) {$a$};
        \node (b) at (\x+1,\y) {$b$};
        \node (c) at (\x,\y) {$c$};
        \node (d) at (\x+2,\y) {$d$};
        \node (v) at (\x+1,\y-1) {$v$};

        \draw[blue,thick] (a)--(b);
        \draw (b)--(c);
        \draw (b)--(d);
        \draw (a)--(c);
        \draw (a)--(d);
        \draw[red] (c)--(v);
        \draw[red] (d)--(v);

\end{tikzpicture}
	\end{center}	
	\caption{A graph with $(\Delta,\RTD,\VCD)$-triple $(2,3,3)$
          and a second graph with $(\Delta,\RTD,\VCD)$-triple $(3,3,4)$. 
          \label{fig:star} }
\end{figure}

\subsection{Connected Sets as Concepts} \label{connected-sets-concepts}

For a connected graph $G$, let $\ell(G)$ denote the number of 
leaves of a maximum-leaf spanning tree of $G$. For a graph $G$
with connected components $G_1,\ldots,G_r$, we define
\begin{equation} \label{eq:max-ell}
\ell(G) = \max_{i=1,\ldots,r}\ell(G_i) \enspace .
\end{equation}

\noindent
The following observations are rather obvious.

\begin{remark} \label{rem:components}
Suppose that $G = (V,E)$ is a graph with components $G_i = (V_i,E_i)$ 
for $i = 1,\ldots,r$ where $r\ge2$. Then 
\begin{equation} \label{eq:components} 
\cC_{con}(G) = \bigcup_{i=1}^{r}\cC_{con}(G_i)
\end{equation}
and the domains $V_1,\ldots,V_r$ of $\cC_{con}(G_1),\ldots,\cC_{con}(G_r)$ 
are pairwise disjoint. Moreover 
\begin{eqnarray} 
\VCD(\cC_{con}(G)) & = & \max_{i=1,\ldots,r}\VCD(\cC_{con}(G_i)) 
\label{eq:components-vcd} \\
\max_{i=1,\ldots,r}\RTD(\cC_{con}(G_i)) & \le & \RTD(\cC_{con}(G)) \le 
\max_{i=1,\ldots,r}\RTD(\cC_{con}(G_i)) \label{eq:components-rtd} \enspace . 
\end{eqnarray}
\end{remark}

\begin{proof}
Equation~(\ref{eq:components}) simply holds because 
a set $X \in \cC_{con}(G)$ cannot have a nonempty intersection
with two or more components of $G$. Then~(\ref{eq:components-vcd})
and~(\ref{eq:components-rtd}) are a direct consequence of the corresponding 
equation resp.~inequalities in~(\ref{eq:max-rtd}).
\end{proof}

\noindent 
The following observations in Remark~\ref{rem:con} and in 
Lemma~\ref{lem:tree2spanning-tree} are quite obvious:

\begin{remark} \label{rem:con}
\begin{enumerate}
\item For any graph $G$ and any subgraph $G' \le G$:
      $\cC_{con}(G') \seq \cC_{con}(G)$.
\item For any tree $T$, $\ell(T)$ equals the number of leaves in $T$.
\item Let $T'$ be a connected subgraph of a tree $T$.
      Then $T'$ is a tree itself and $\ell(T') \le \ell(T)$.
      Moreover any connected subgraph of $T$ which contains all leaves 
      of $T'$ must necessarily contain all vertices of $T'$.
\end{enumerate}
\end{remark}

\begin{lemma} \label{lem:tree2spanning-tree}
For any connected graph $G$, the following hold:
\begin{enumerate}
\item
For any $X \in \cC_{con}(G)$, there exists a spanning tree 
for $\Span{N_G(X)}_G$ which has the vertices in $N_G^o(X)$
among its leaves.
\item
There exists a set $X \in \cC_{con}(G)$ such that $|N_G^o(X)| = \ell(G)$.
\item
Any tree $T \le G$ with $k$ leaves can be extended to a spanning tree $T^*$
of $G$ with at least $k$ leaves. Moreover, if $k = \ell(G)$, then the extension $T^*$
of $T$ can be chosen such that the following holds:
\begin{enumerate}
\item 
$T^*$ consists of $T$ and additional paths starting at the leaves of $T$.
\item
If there exists an interior vertex $u$ of $T$ such that each pair 
of distinct leaves in $T$ can be connected by a path in $G$ which does neither
contain $u$ nor any other leaf in $T$, then each pair of distinct leaves in $T^*$ 
can be connected by a path in $G$ which does neither contain $u$ nor any other
leaf in $T^*$.
\end{enumerate}
\end{enumerate}
\end{lemma}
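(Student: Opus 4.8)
The plan is to handle the parts in increasing order of difficulty, spending almost all of the effort on Part~3(b). For Part~1 I would, assuming $X\neq\eset$, fix a spanning tree $T_X$ of the connected graph $\Span{X}_G$ and then hang each $v\in N_G^o(X)$ onto a single arbitrarily chosen neighbour of $v$ inside $X$; the outcome is connected, spans $N_G(X)=X\cup N_G^o(X)$, and has exactly $|N_G(X)|-1$ edges, so it is a spanning tree of $\Span{N_G(X)}_G$ in which every $v\in N_G^o(X)$ has degree $1$, i.e.\ is a leaf. For Part~2 I would take a maximum-leaf spanning tree $T$ with leaf set $L$ (so $|L|=\ell(G)$) and put $X=V\sm L$: deleting leaves from a tree leaves a subtree, so $X\in\cC_{con}(G)$, and since for $|V|\ge3$ the tree-neighbour of a leaf is always an interior vertex, one gets $N_G^o(X)=L$ and hence $|N_G^o(X)|=\ell(G)$.

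For the basic claim of Part~3 the plan is a greedy extension: while the current tree does not span $G$, use connectivity to find a vertex $w$ outside it adjacent to it and attach $w$ by one edge. Each such step changes the number of leaves by $+1$ (if $w$ is hung onto an interior vertex) or by $0$ (if onto a current leaf, which then stops being a leaf), so the leaf count never drops and the final spanning tree $T^*$ has at least $k$ leaves. When $k=\ell(G)$ the count is pinned to $\ell(G)$ throughout, which forces every addition to be made onto a \emph{current leaf}; such an addition is always available, since otherwise attaching $w$ to an interior vertex would produce a tree with $\ell(G)+1$ leaves, extendable by the basic claim to a spanning tree with more than $\ell(G)$ leaves, contradicting maximality. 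This immediately gives statement~(a): the added edges form paths growing out of the leaves of $T$, and sending each leaf of $T$ to the far end of its path (or to itself) is a bijection onto the $\ell(G)$ leaves of $T^*$.

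Part~3(b) is where I expect the real work to lie. First note that $u$, being interior to $T$, is never touched and stays interior in $T^*$. The plan is to \emph{coordinate} the extension with the hypothesis: fix, for every pair of leaves $\lambda,\lambda'$ of $T$, a path $P_{\lambda\lambda'}$ in $G$ avoiding $u$ and all other leaves of $T$, and set $W=\bigcup_{\lambda,\lambda'}V(P_{\lambda\lambda'})$. If the extension of Part~3(a) can be steered so that \emph{no newly created leaf lies in $W$}, then the conclusion drops out: for leaves $\lambda_1^*,\lambda_2^*$ of $T^*$ with preimages $\lambda_1,\lambda_2$, I would concatenate the attached path from $\lambda_1^*$ down to $\lambda_1$, the path $P_{\lambda_1\lambda_2}\seq W$, and the attached path from $\lambda_2$ up to $\lambda_2^*$, and reduce this walk to a simple path. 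The result avoids $u$ (no piece contains it) and meets $L(T^*)$ only in $\lambda_1^*,\lambda_2^*$, because the interiors of the two attached paths are non-leaves, $P_{\lambda_1\lambda_2}$ misses the old leaves other than $\lambda_1,\lambda_2$ by construction, and it misses every new leaf precisely because the new leaves were kept out of $W$.

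The main obstacle is thus the steering claim: that the greedy extension can be arranged so that every far-end leaf avoids $W$. I would prove this by absorbing each vertex of $W\sm V(T)$ as an \emph{interior} vertex of some attached path, which is feasible because any such vertex occurs as an internal vertex of some $P_{\lambda\lambda'}$ and therefore has $G$-degree at least $2$; in particular a degree-$1$ vertex of $G$ outside $V(T)$ can never lie on any $P_{\lambda\lambda'}$, so it is never in $W$ and may safely become a leaf. Concretely I would maintain, as an invariant of the extension, that every not-yet-absorbed vertex of $W\sm V(T)$ still has an unused neighbour along which its path can be prolonged; should this global bookkeeping prove fragile, the fallback is a local fix inside the induction of Part~3(a), prolonging a path by one extra vertex whenever a step would otherwise turn a $W$-vertex into a leaf.
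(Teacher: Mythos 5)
Your handling of Parts 1, 2 and 3/3(a) is correct and essentially the paper's own argument: attach each vertex of $N_G^o(X)$ to a neighbour inside $X$; take the interior vertices of a maximum-leaf spanning tree; grow the tree edge by edge, the pinned leaf count forcing every new vertex to be hung onto a current leaf. You have also correctly isolated the difficulty in Part 3(b) that the paper's own proof silently steps over: the concatenation of $P_{v',v}$, $P_{v,w}$, $P_{w,w'}$ need not avoid the leaves of $T^*$, since the hypothesis path $P_{v,w}$ may run through vertices outside $T$ that have meanwhile become leaves of $T^*$.

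Your repair, however, fails --- and necessarily so, because Part 3(b) is false as stated. Let $G=K_{2,3}$ with vertex classes $\{u,x\}$ and $\{a,b,c\}$, and let $T$ be the star with centre $u$ and leaves $a,b,c$. Then $k=3=\ell(G)$ (a spanning tree with $4$ leaves would be a star, but $G$ has no vertex of degree $4$), and the hypothesis of 3(b) holds with interior vertex $u$: the paths $a,x,b$ and $b,x,c$ and $a,x,c$ avoid $u$ and the respective third leaf. Since $x$ is the only vertex outside $T$ and all of its neighbours $a,b,c$ are leaves of $T$, the admissible extensions are exactly $T^*=T+\{a,x\}$ and its two relabellings; say $T^*=T+\{a,x\}$, with leaves $b,c,x$. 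Any path in $G$ from $b$ to $c$ leaves $b$ through $u$ or through $x$, hence meets the forbidden vertex $u$ or the $T^*$-leaf $x$; so the conclusion of 3(b) fails for every admissible extension. This example also pinpoints exactly where your argument breaks: $x\in W$, but $x$ cannot be absorbed as an interior vertex of an attached path because all of its neighbours already lie in $T$; having $G$-degree at least $2$ does not provide an unused neighbour along which to prolong a path, so neither your invariant nor your local fallback can be maintained. (Already in the cycle $C_n$, with $T$ the path $1,2,3$ and $u=2$, every vertex outside $T$ lies in $W$, so your steering requirement is unattainable even though the conclusion of 3(b) happens to hold there; steering is thus the wrong invariant, and $K_{2,3}$ shows that no invariant can work.) For what it is worth: Part 3(b) is used in the paper only to derive assertion~1 from assertion~2 in Lemma~\ref{lem:con-vcd}, and that implication may still be salvageable if one allows the distinguished interior vertex to change --- in the example, $T^*=T+\{a,x\}$ has the required property with respect to its interior vertex $a$ rather than $u$; the paper's main theorems do not depend on 3(b).
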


\begin{proof}
\begin{enumerate}
\item
The desired spanning tree $T$ for $\Span{N_G(X)}_G$ can be built as follows.
Initialize $T$ as an (arbitrarily chosen) spanning tree for $\Span{X}_G$. Then,
for all $y \in N_G^o(X)$, do the following: pick a vertex $x \in X$
which is adjacent to $y$ and insert the edge $(x,y)$ into $T$ (so that $y$
becomes a new leaf).
\item
A suitable set $X$ is the set of interior vertices of a maximum-leaf 
spanning tree $T$ for $G$ (so that $N_G^o(X)$ is the set of $T$'s leaves).
\item
Suppose that $T \le G$ is a tree with $k$ leaves. If $T$ is not 
a spanning tree, then there exists an edge $e$ that connects a
vertex $v$ in $T$ with a vertex $v'$ not in $T$. Extend $T$ by adding
edge $e$ and let $T'$ denote the new tree. The new vertex $v'$ is
a leaf of $T'$. Moreover, each leaf in $T$, with the possible 
exception of $v$, is still a leaf in $T'$. Hence $T'$ has $k$ 
or $k+1$ leaves. We max proceed in this manner until the iteratively
extended tree is a spanning tree $T^*$ of $G$. Suppose now that $k = \ell(G)$.
Then $T'$, consisting of $T$ and the additional edge $\{v,v'\}$,
has exactly $k$ leaves. It necessarily follows that $v$ is a leaf 
in $T$ but not a leaf in $T'$. Using this argument iteratively, we
see that the final spanning tree $T^*$ consists of $T$ and additional paths
that start at the leaves of $T$. Finally suppose that there exists 
an interior vertex $u$ of $T$ such that each pair $v,w$ of distinct 
leaves in $T$ can be connected by a path $P_{v,w}$ in $G$ which does 
neither contain $u$ nor any other leaf in $T$. Consider a pair $v',w'$ 
of distinct leaves in $T^*$. Let $v,w$ be the corresponding pair of leaves 
in $T$ (such that $T^*$ contains a path $P_{v',v}$ from $v'$ to $v$ and 
another path $P_{w,w'}$ from $w$ to $w'$). Then the composition of the 
paths $P_{v',v} , P_{v,w} , P_{w,w'}$ is a path in $G$ that connects $v'$
and $w'$ and does neither contain $u$ nor any other leaf in $T^*$.
\end{enumerate}
\end{proof}

\noindent 
As a direct consequence of Lemma~\ref{lem:tree2spanning-tree},
we get the following result:

\begin{corollary} \label{cor:tree2spanning-tree}
For any connected graph $G$, the parameter $\ell(G)$ coincides 
with the largest number $k$ such that there exists a tree $T \le G$ 
with $k$ leaves. Moreover it also coincides with the size of the 
largest open neighborhood of some set in $\cC_{con}(G)$.
\end{corollary}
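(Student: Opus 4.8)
The plan is to prove both asserted equalities at once by arranging three quantities into a cyclic chain of inequalities. Write $k^\ast$ for the largest number of leaves carried by any tree $T \le G$, and write $n^\ast = \max_{X \in \cC_{con}(G)}|N_G^o(X)|$ for the size of the largest open neighborhood of a connected set. The corollary asserts precisely that $\ell(G) = k^\ast = n^\ast$, so it suffices to establish $n^\ast \le k^\ast \le \ell(G) \le n^\ast$ and let the cycle collapse.

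For the first link $n^\ast \le k^\ast$, I would fix an arbitrary $X \in \cC_{con}(G)$ and invoke part~1 of Lemma~\ref{lem:tree2spanning-tree}, which hands us a spanning tree $T$ of $\Span{N_G(X)}_G$ having all of $N_G^o(X)$ among its leaves. Since $T$ has at least $|N_G^o(X)|$ leaves and is an admissible competitor in the definition of $k^\ast$, we get $k^\ast \ge |N_G^o(X)|$; maximizing over $X$ yields $k^\ast \ge n^\ast$. For the second link $k^\ast \le \ell(G)$, I would take any tree $T \le G$ that realizes $k^\ast$ leaves and apply part~3 of the lemma to extend it to a spanning tree $T^\ast$ of $G$ with at least $k^\ast$ leaves; as $\ell(G)$ is by definition the maximum number of leaves over all spanning trees, this gives $\ell(G) \ge k^\ast$. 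The third link $\ell(G) \le n^\ast$ is immediate from part~2 of the lemma, which exhibits a single set $X \in \cC_{con}(G)$ with $|N_G^o(X)| = \ell(G)$, so $n^\ast \ge \ell(G)$. Chaining these gives $n^\ast \le k^\ast \le \ell(G) \le n^\ast$, forcing all three to coincide.

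I do not anticipate a genuine obstacle, since every piece of combinatorial work — constructing the spanning tree that realizes an open neighborhood as a set of leaves, and extending an arbitrary tree to a leaf-preserving spanning tree — has already been discharged inside Lemma~\ref{lem:tree2spanning-tree}. The one point that demands a moment's care is the type-checking in the first link: one must confirm that the object produced by part~1, a spanning tree of the \emph{spanned} subgraph $\Span{N_G(X)}_G$, really is a tree $T \le G$ in the sense of the subgraph relation (its vertex set is $N_G(X) \seq V$ and its edges lie in $E$), so that it legitimately bounds $k^\ast$ from below rather than merely bounding some parameter of the subgraph $\Span{N_G(X)}_G$.

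<br>

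\noindent\textit{Remark on the structure.} I would present the argument as the short cyclic chain above rather than proving each of the two equalities in the statement separately, since the equality $\ell(G)=k^\ast$ is then reused (implicitly through $k^\ast$) to obtain the equality $\ell(G)=n^\ast$, avoiding any duplication of the extension argument from part~3 of the lemma.
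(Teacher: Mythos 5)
Your proposal is correct and takes essentially the same route as the paper, which states this corollary as a direct consequence of Lemma~\ref{lem:tree2spanning-tree}: your cyclic chain $n^\ast \le k^\ast \le \ell(G) \le n^\ast$ is precisely how the three parts of that lemma (part~1 for the first link, part~3 for the second, part~2 for the third) combine to give both equalities at once.
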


\begin{lemma}[\cite{KKRUW1997}] \label{lem:con-tree}
For any tree $T$: $\VCD(\cC_{con}(T)) = \ell(T)$.
\end{lemma}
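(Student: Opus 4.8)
The plan is to prove the two inequalities $\VCD(\cC_{con}(T)) \ge \ell(T)$ and $\VCD(\cC_{con}(T)) \le \ell(T)$ separately, in both cases exploiting the fact that, in a tree, any two vertices are joined by a unique path and hence every set $S \neq \eset$ of vertices has a unique smallest connected superset, namely the ``Steiner subtree'' $T_S$ obtained as the union of all paths joining pairs of vertices of $S$ (with $T_S = S$ when $|S| \le 1$). By Remark~\ref{rem:con}(2), $\ell(T)$ is simply the number of leaves of $T$; I denote by $L$ this set of leaves, so that $|L| = \ell(T)$.

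For the lower bound I would show that $L$ itself is shattered. Given $S' \seq L$, I take $X = \eset \in \cC_{con}(T)$ if $S' = \eset$, and $X = T_{S'}$ otherwise; in the latter case $X$ is a nonempty connected set, hence $X \in \cC_{con}(T)$. It then remains to check $X \cap L = S'$. Since a leaf of $T$ has degree $1$, it can never be an interior vertex of a subtree, so every element of $L$ contained in $X$ is in fact a leaf of $X$; and by minimality of $T_{S'}$ the leaves of $T_{S'}$ are exactly the vertices of $S'$. Hence $X \cap L = S'$, so $L$ is shattered and $\VCD(\cC_{con}(T)) \ge |L| = \ell(T)$.

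For the upper bound I would start from an arbitrary shattered set $S$ (the case $|S|\le 1$ being trivial) and argue that every element of $S$ is a leaf of $T_S$. Fix $x \in S$. Shattering supplies a concept $X \in \cC_{con}(T)$ with $X \cap S = S \sm \{x\}$, i.e.\ a connected set containing $S \sm \{x\}$ but avoiding $x$. In a tree every connected set containing $S \sm \{x\}$ must contain the Steiner subtree $T_{S \sm \{x\}}$; since $x \notin X$ this forces $x \notin T_{S \sm \{x\}}$, so $x$ is attached to $T_{S \sm \{x\}}$ by a path of positive length inside $T_S$ and is therefore a leaf of $T_S$. As this holds for every $x \in S$, the $|S|$ distinct vertices of $S$ are all leaves of the subtree $T_S \le T$, whence $|S| \le \ell(T_S)$. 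Finally, $T_S$ is a connected subgraph of the tree $T$, so Remark~\ref{rem:con}(3) gives $\ell(T_S) \le \ell(T)$, and thus $|S| \le \ell(T)$.

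The main obstacle---indeed the only delicate point---is the minimality argument in a tree: that $T_S$ is contained in every connected superset of $S$, so that keeping $S \sm \{x\}$ connected while excluding $x$ is possible \emph{if and only if} $x \notin T_{S \sm \{x\}}$, and that this in turn is equivalent to $x$ being a leaf of $T_S$. Once this correspondence between ``elements of a shattered set'' and ``leaves of the spanned subtree'' is established, both bounds drop out at once and meet exactly at $\ell(T)$.
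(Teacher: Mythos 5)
Your proof is correct, but there is no proof in the paper to compare it with: Lemma~\ref{lem:con-tree} is quoted from~\cite{KKRUW1997} and used as a black box, so your argument is a self-contained substitute for an external reference rather than a variant of anything in the text. Judged on its own, it is sound. The lower bound correctly exploits that $\eset\in\cC_{con}(T)$ (this is genuinely needed, e.g.\ for $T=K_2$, and is precisely why the paper insists on putting $\eset$ into the class), and the upper bound rests on the right key fact: in a tree, every connected superset of $S\sm\{x\}$ contains the Steiner tree $T_{S\sm\{x\}}$, so shattering forces $x\notin T_{S\sm\{x\}}$, which makes $x$ a leaf of $T_S$ and yields $|S|\le\ell(T_S)\le\ell(T)$ via Remark~\ref{rem:con}. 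It is worth recording how this sits next to what the paper \emph{does} prove: your witnesses $T_{S'}$ (together with $\eset$) form an alternative shattering family to the one implicit in the first part of Theorem~\ref{th:con-rtd}, where the sets $X_0\cup L'$ (interior vertices $X_0$ plus an arbitrary subset $L'$ of the leaves) restrict to the full powerset of the leaf set; both families work, yours being the minimal and the paper's the maximal connected sets with a prescribed trace on the leaves. Likewise, your correspondence between elements of a shattered set and leaves of the spanned subtree is the tree-specific analogue of the construction inside the proof of Lemma~\ref{lem:con-vcd}, where a shattered set in a general graph is converted into a tree whose leaves are shattered vertices. Two minor points of precision: minimality of $T_{S'}$ by itself gives only that every leaf of $T_{S'}$ lies in $S'$ --- fortunately exactly the inclusion your argument needs, the converse one coming from your separate observation that a leaf of $T$ is a leaf of every subtree containing it; and, like the lemma as stated, your proof tacitly assumes that $T$ has at least two vertices (for the one-vertex tree the VC-dimension is $1$ while $\ell$ vanishes under the degree-one definition of a leaf, a degenerate case the paper ignores throughout).
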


\begin{theorem}[\cite{KKRUW1997}] \label{th:con-vcd}
For any connected graph $G$: $\ell(G) \le \VCD(\cC_{con}(G)) \le \ell(G)+1$.
\end{theorem}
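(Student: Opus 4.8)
The plan is to prove the two inequalities separately, reusing the tree machinery of Lemma~\ref{lem:con-tree}, Lemma~\ref{lem:tree2spanning-tree} and Corollary~\ref{cor:tree2spanning-tree}. I will repeatedly use the fact, immediate from Corollary~\ref{cor:tree2spanning-tree} together with Remark~\ref{rem:con}, that $\ell(G)$ equals the maximum number of leaves over all subtrees $T \le G$, and that for such a subtree $\ell(T) \le \ell(G)$.

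\emph{Lower bound $\ell(G) \le \VCD(\cC_{con}(G))$.} By part~2 of Lemma~\ref{lem:tree2spanning-tree} there is a set $X \in \cC_{con}(G)$ with $|N_G^o(X)| = \ell(G)$. I claim $S := N_G^o(X)$ is shattered. Fix any $S' \seq S$. Since $X$ is connected and every vertex of $S' \seq N_G^o(X)$ is adjacent to $X$, the set $X \cup S'$ spans a connected subgraph, so $X \cup S' \in \cC_{con}(G)$. Moreover $X \cap S = \eset$ (because $S = N_G(X) \sm X$ by Definition~\ref{def:neighborhood}), hence $(X \cup S') \cap S = S'$. Thus every subset of $S$ is realized and $\VCD(\cC_{con}(G)) \ge |S| = \ell(G)$.

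\emph{Upper bound $\VCD(\cC_{con}(G)) \le \ell(G)+1$: strategy.} Let $S$ be a shattered set and $s := |S|$; I must show $s \le \ell(G)+1$. The idea is to reduce to the tree case by producing a subtree $T \le G$ with at least $s-1$ leaves, which gives $s-1 \le \ell(G)$ as above. Equivalently, it suffices to find a subtree $T \le G$ and an element $z \in S$ such that $S \sm \{z\}$ is shattered by $\cC_{con}(T)$; then Lemma~\ref{lem:con-tree} yields $s-1 \le \VCD(\cC_{con}(T)) = \ell(T) \le \ell(G)$. The natural first attempt is to take a \emph{minimal} subtree $T$ containing $S$ (a Steiner tree for $S$), since by minimality every leaf of $T$ lies in $S$. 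However, this alone is insufficient: some elements of $S$ may be forced to be \emph{interior} vertices of $T$, so that $T$ has far fewer than $s-1$ leaves. Already $C_6$ with $S = \{1,3,5\}$ exhibits an element of $S$ interior to its Steiner tree (the path $1\,2\,3\,4\,5$), and one can construct graphs in which two elements of a shattered $S$ are interior. So the Steiner tree must be improved using the extra connectivity that shattering guarantees.

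\emph{Key step and main obstacle.} This is where the real work lies. If $x \in S$ is an interior vertex of the current subtree, then, because $S$ is shattered, the subset $S \sm \{x\}$ is realized by a connected set $C$ with $x \notin C$ and $S \sm \{x\} \seq C$; thus $G$ contains detours joining the branches of $T$ meeting at $x$ \emph{without passing through} $x$. The plan is to splice such detours in so as to demote $x$ to a leaf, while keeping $S$ inside the vertex set and not destroying leaves already created, and to iterate over all interior elements of $S$ but (possibly) one, ending with a subtree of at least $s-1$ leaves. This is precisely the scenario that item~3 of Lemma~\ref{lem:tree2spanning-tree} is tailored for: its extension-by-paths-at-the-leaves construction, together with the leaf-avoidance property in item~3(b) — distinct leaves remain connectable by a path avoiding a prescribed interior vertex and the other leaves — is what certifies that the rerouting can be carried out consistently. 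I expect the hard part to be exactly this bookkeeping: proving that \emph{at most one} element of $S$ can resist being turned into a leaf, so that the exceptional vertex $z$ accounts for the single unit of slack in $\VCD(\cC_{con}(G)) = \ell(G)+1$. Once that is established, the remaining inequalities follow as indicated above.
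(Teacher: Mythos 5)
First, a point of reference: the paper itself gives no proof of this theorem (it is quoted from \cite{KKRUW1997}); the closest thing in the paper is the proof of Lemma~\ref{lem:con-vcd}, whose step ``third assertion implies second'' contains exactly the construction your upper bound is missing. Your lower bound is correct and complete: taking $X \in \cC_{con}(G)$ with $|N_G^o(X)| = \ell(G)$ and realizing each $S' \seq N_G^o(X)$ by the connected set $X \cup S'$ is a clean argument (alternatively, fix a maximum-leaf spanning tree $T$ and combine Lemma~\ref{lem:con-tree} with the monotonicity $\cC_{con}(T) \seq \cC_{con}(G)$ from Remark~\ref{rem:con}).

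The upper bound, however, has a genuine gap, and you flag it yourself: everything hinges on the claim that at most one element of a shattered set $S$ can fail to be made a leaf, and this claim is announced (``I expect the hard part to be exactly this bookkeeping'') rather than proved. The Steiner-tree-plus-splicing route is also the wrong tool: item~3 of Lemma~\ref{lem:tree2spanning-tree} only extends a given tree to a spanning tree by hanging paths on its leaves; it provides no mechanism for rerouting around an interior vertex without creating cycles or destroying leaves already gained, so it cannot ``certify that the rerouting can be carried out consistently.'' The point you are missing is that shattering hands you the tree directly, with no iteration and no exceptional-vertex bookkeeping. Fix $v_0 \in S$ and write $S = \{v_0, v_1, \ldots, v_{s-1}\}$. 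For each $i \ge 1$, shattering yields $C_i \in \cC_{con}(G)$ with $C_i \cap S = \{v_0, v_i\}$, hence a path $P_i$ in $G$ from $v_0$ to $v_i$ avoiding $S \sm \{v_0,v_i\}$. The set $W = \bigcup_{i\ge1} \bigl( V(P_i) \sm \{v_i\} \bigr)$ is connected (a union of connected sets all containing $v_0$) and contains no $v_i$ with $i \ge 1$; take a spanning tree of $\Span{W}_G$ and attach each $v_i$ as a pendant vertex via its last edge on $P_i$. The result is a tree in $G$ with $v_1, \ldots, v_{s-1}$ among its leaves, so $s-1 \le \ell(G)$ by Corollary~\ref{cor:tree2spanning-tree}, i.e., $s \le \ell(G)+1$. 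This is precisely the paths-to-tree construction in the proof of Lemma~\ref{lem:con-vcd}: had you applied shattering to the pairs $\{v_0,v_i\}$ instead of to the large sets $S \sm \{x\}$, your ``detours'' would have assembled themselves into the desired tree with no further work.
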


\begin{corollary}
For any graph $G$: $\ell(G) \le \VCD(\cC_{con}(G)) \le \ell(G)+1$.
\end{corollary}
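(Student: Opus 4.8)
The plan is to reduce the general (possibly disconnected) case to the connected case settled in Theorem~\ref{th:con-vcd} by passing to the connected components of $G$. Write $G_1,\ldots,G_r$ for the components of $G$. If $r=1$, then $G$ is connected and the assertion is literally Theorem~\ref{th:con-vcd}, so nothing remains to be done. I would therefore assume $r \ge 2$, which is exactly the setting of Remark~\ref{rem:components}.

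First I would invoke equation~(\ref{eq:components-vcd}) to record that $\VCD(\cC_{con}(G)) = \max_{i=1,\ldots,r}\VCD(\cC_{con}(G_i))$. Since each component $G_i$ is by definition connected, Theorem~\ref{th:con-vcd} applies to it and gives the sandwiching $\ell(G_i) \le \VCD(\cC_{con}(G_i)) \le \ell(G_i)+1$ for every index $i$. The corollary will then follow by taking maxima over $i$ on both sides of this chain and comparing with the definition of $\ell(G)$ in~(\ref{eq:max-ell}).

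Concretely, for the lower bound I would take the maximum over $i$ of the left inequality to get $\max_i \VCD(\cC_{con}(G_i)) \ge \max_i \ell(G_i) = \ell(G)$, the last equality being~(\ref{eq:max-ell}). For the upper bound I would take the maximum over $i$ of the right inequality to get $\max_i \VCD(\cC_{con}(G_i)) \le \max_i\bigl(\ell(G_i)+1\bigr) = \bigl(\max_i \ell(G_i)\bigr)+1 = \ell(G)+1$. Substituting the identity for $\VCD(\cC_{con}(G))$ from the first step into both of these then yields $\ell(G) \le \VCD(\cC_{con}(G)) \le \ell(G)+1$, as desired.

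I do not expect any genuine obstacle here: the argument is just a propagation of the connected-graph bounds through the component decomposition, relying only on the fact that both $\VCD(\cC_{con}(\cdot))$ and $\ell(\cdot)$ are computed as a maximum over components. The single point worth a sentence of care is that the maximum distributes over the added constant, i.e.\ $\max_i\bigl(\ell(G_i)+1\bigr) = \bigl(\max_i \ell(G_i)\bigr)+1$, which is immediate. Degenerate components such as isolated vertices are subsumed by the same reasoning once the convention for $\ell$ on a one-vertex tree is fixed, so no separate treatment is needed.
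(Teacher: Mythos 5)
Your proposal is correct and follows essentially the same route as the paper: decompose $G$ into its connected components, apply Theorem~\ref{th:con-vcd} componentwise, and combine via the component identities for $\VCD$ (Remark~\ref{rem:components}, i.e.\ the specialization of~(\ref{eq:max-vcd})) and the definition of $\ell(G)$ in~(\ref{eq:max-ell}). The paper's own proof is just a terser statement of exactly this argument, so there is nothing to add.
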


\begin{proof}
This is a direct consequence of Theorem~\ref{th:con-vcd}
in combination with the equations~(\ref{eq:max-ell})
and~(\ref{eq:max-vcd}). The application of~(\ref{eq:max-vcd})
is justified by the first part of Remark~\ref{rem:components}.
\end{proof}

\noindent
Similar results hold with RTD at the place of VCD:

\begin{theorem} \label{th:con-rtd}
\begin{enumerate}
\item
For any tree~$T$: $\RTD(\cC_{con}(T)) = \ell(T)$.
\item
For any connected graph $G$: $\RTD(\cC_{con}(G)) \ge \ell(G)$.
\item
For any graph $G$: $\ell(G) \le \RTD(\cC_{con}(G)) \le \ell(G)+1$.
\end{enumerate}
\end{theorem}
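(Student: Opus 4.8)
The plan is to establish the three statements in the order $2 \Rightarrow 1 \Rightarrow 3$, since the lower bound for connected graphs (part~2) feeds the tree case (part~1), and the component decomposition of Remark~\ref{rem:components} reduces part~3 to the connected case. For part~2 I would exploit equation~(\ref{eq:rtd-tdmin}) together with monotonicity of $\RTD$ under subclasses. By Corollary~\ref{cor:tree2spanning-tree} there is a set $X \in \cC_{con}(G)$ with $|N_G^o(X)| = \ell(G)$. Since $X$ is connected and every $y \in N_G^o(X)$ is adjacent to $X$, the set $X \cup Y$ is connected for every $Y \seq N_G^o(X)$; hence $\cC_0 = \{X \cup Y : Y \seq N_G^o(X)\}$ is a subclass of $\cC_{con}(G)$. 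The instances in $X$ are labelled $+$ and those outside $N_G(X)$ are labelled $-$ by every concept of $\cC_0$, so only the instances in $N_G^o(X)$ are useful for teaching, and the restriction of $\cC_0$ to $N_G^o(X)$ is exactly the powerset $2^{N_G^o(X)}$ (because $(X\cup Y)\cap N_G^o(X)=Y$). Example~\ref{ex:powerset-rtd} then gives $\RTD(\cC_0) = |N_G^o(X)| = \ell(G)$, and monotonicity yields $\RTD(\cC_{con}(G)) \ge \ell(G)$.

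For the upper bound in part~1 I would again use~(\ref{eq:rtd-tdmin}), i.e.\ show $\TD_{min}(\cC_0) \le \ell(T)$ for every subclass $\cC_0 \seq \cC_{con}(T)$. Pick a concept $C \in \cC_0$ that is \emph{maximal} with respect to inclusion; if $C = \eset$ then $\cC_0 = \{\eset\}$ and there is nothing to prove, so assume $C \neq \eset$. The teaching set consists of one positive example on each leaf of the subtree $\Span{C}_T$. By Remark~\ref{rem:con} the number of these leaves is $\ell(\Span{C}_T) \le \ell(T)$, so the teaching set has at most $\ell(T)$ examples. It distinguishes $C$ from every other $C' \in \cC_0$: any connected $C'$ containing all leaves of $C$ must, by the last sentence of Remark~\ref{rem:con}, contain all of $\Span{C}_T$, hence $C' \supseteq C$, which forces $C' = C$ by maximality; so every $C' \neq C$ omits some leaf of $C$ and is ruled out by the corresponding positive example. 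Together with part~2 (a tree is connected) this gives $\RTD(\cC_{con}(T)) = \ell(T)$.

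The heart of part~3 is the upper bound $\RTD(\cC_{con}(G)) \le \ell(G)+1$ for connected $G$, which I would prove in the equivalent PBT-setting by exhibiting a PB-teacher of order at most $\ell(G)+1$. The preference relation makes $\eset$ the unique most preferred concept and orders the nonempty concepts by inclusion (larger sets preferred). Each nonempty $C$ is taught by placing negative examples on all of $N_G^o(C)$ and one positive example on an arbitrary vertex $v\in C$; since $|N_G^o(C)| \le \ell(G)$ by Corollary~\ref{cor:tree2spanning-tree}, this teaching set has size at most $\ell(G)+1$, while $\eset$ is taught with the empty teaching set. Correctness rests on the characterization in Remark~\ref{rem:order-teacher} and a connectivity argument: a connected $C'$ avoiding $N_G^o(C)$ but meeting $C$ cannot leave $C$ (it would have to cross the boundary $N_G^o(C)$), so $C' \seq C$; the positive example then discards $\eset$ and every connected set disjoint from $C$, leaving a version space of connected subsets of $C$ that contain $v$, whose unique most preferred element is $C$. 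For general $G$, inequality~(\ref{eq:components-rtd}) of Remark~\ref{rem:components} reduces the problem to the components: combining it with part~2, the connected upper bound, and~(\ref{eq:max-ell}) gives $\ell(G) \le \RTD(\cC_{con}(G)) \le \ell(G)+1$.

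I expect the main obstacle to be pinning down the exact $+1$ behaviour, i.e.\ understanding why trees admit the sharp bound $\ell(T)$ while general connected graphs seem to require $\ell(G)+1$. For trees the leaves of a subtree determine it among all connected subsets (Remark~\ref{rem:con}), so positive examples alone suffice and the maximal-concept argument of part~1 applies; in a graph with cycles this fails (in $C_4$ the two leaves of a spanning path of the whole vertex set do not determine it, since the remaining two vertices reconnect), which is what forces the boundary-plus-one-positive scheme and the extra unit. The delicate point in that scheme is choosing the preference relation so that teaching $\eset$ does not degenerate into ruling out every singleton, which would cost $|V|$ examples: putting $\eset$ at the top while keeping superset-preferences among the nonempty concepts is exactly what lets a single positive example simultaneously discard $\eset$ and all connected sets disjoint from $C$.
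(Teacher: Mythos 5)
Your proposal is correct, and its three core constructions are the same as the paper's: a powerset embedding for the lower bounds, teaching a connected set in a tree by the leaves of $\Span{C}_T$, and the boundary-negatives-plus-one-positive scheme with superset-type preferences in general. The organizational differences are worth recording. First, you prove part~2 directly by embedding $2^{N_G^o(X)}$ for a connected $X$ with $|N_G^o(X)|=\ell(G)$ (Corollary~\ref{cor:tree2spanning-tree}), whereas the paper proves the tree case first and obtains part~2 from monotonicity, using $\cC_{con}(T) \seq \cC_{con}(G)$ for a maximum-leaf spanning tree $T$; both are sound, and yours is marginally more self-contained. Second, for the tree upper bound you bypass the PBT formulation and argue via~(\ref{eq:rtd-tdmin}), teaching a concept that is \emph{maximal} in an arbitrary subclass $\cC_0$; this is the paper's subset-preference argument unrolled into a $\TD_{min}$ statement, and it is equally valid. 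Third, your handling of the empty set is actually \emph{more} careful than the paper's: the paper takes plain superset-preferences on all of $\cC_{con}(G)$, under which $\eset$ is the least preferred concept and would therefore need a teaching set ruling out every singleton --- a genuine oversight, given that the paper insists $\eset \in \cC_{con}(G)$. Promoting $\eset$ to the unique most preferred concept while keeping superset-preferences among nonempty concepts, as you do, repairs this at no cost; your closing remark correctly identifies why this matters.

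One caution on your part~3: the reduction to components leans on the right-hand inequality of~(\ref{eq:components-rtd}), which as printed carries no $+1$, but whose stated justification (via~(\ref{eq:max-rtd})) only yields $\RTD(\cC_{con}(G)) \le 1+\max_i \RTD(\cC_{con}(G_i))$; under that corrected reading your chain would only give $\ell(G)+2$. The fix is immediate and is in fact what the paper does: run your PB-teacher on the whole graph rather than componentwise. Since the teaching set of every nonempty concept $X$ contains a positive example, all concepts from other components (which label that instance negatively) are expelled from the version space, and $N_G^o(X) = N_{G_k}^o(X)$ has size at most $\ell(G_k) \le \ell(G)$ by Corollary~\ref{cor:tree2spanning-tree}, so the bound $\ell(G)+1$ holds for disconnected $G$ with no appeal to~(\ref{eq:components-rtd}) at all.
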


\begin{proof}
\begin{enumerate}
\item
We first show that $\ell(T)$ bounds $\RTD(\cC_{con}(T))$ from below.
Let $X_0$ be the set of interior vertices of $T$ and let $L_0$ be 
the set of its leaves. 
Consider the subclass $\cC_0 = \{X_0 \cup L:\ L \seq L_0\}$
of $\cC_{con}(G)$. The instances in $X_0$ or outside $X_0 \cup L_0$
do not distinguish between the concepts in $\cC_0$. But $\cC_0$ restricted 
to the subdomain $L_0$ equals $2^{L_0}$. Hence 
\[ 
\RTD(\cC_{con}(T)) \ge \TD_{min}(\cC_0) = |L_0| = \ell(T)
\enspace .
\]
We next show that $\ell(T)$ bounds $\RTD(\cC_{con}(T)) = \PBTD(\cC_{con}(T))$ 
from above. A nonempty concept in $\cC_{con}(T)$ is a nonempty connected 
subset $X$ of the vertices of $T$ so that $T' := \Span{X}_T$ is a connected 
subgraph of $T$. The third assertion in Remark~\ref{rem:con} implies that $T'$ 
is a tree itself.
Moreover $\ell(T') \le \ell(T)$ and $X$ is the unique smallest concept 
in $\cC_{con}(T)$ that contains all leaves of $T'$. Hence $X$ can be taught 
by having subset-preferences and by presenting all leaves of $T'$ 
as positive examples. Hence $X$ is taught by presenting $\ell(T') \le \ell(T)$ 
examples. Thanks to subset-preferences, the empty set is successfully taught
even if its teaching set is empty.
\item
Fix a maximum-leaf spanning tree $T$ of $G$. The first assertion in
Remark~\ref{rem:con} combined with the (already proven) first assertion
of this theorem implies that
\[
\RTD(\cC_{con}(G)) \ge \RTD(\cC_{con}(T)) = \ell(T) = \ell(G) \enspace .
\]
\item
Let $G_1,\ldots,G_r$ be the components of $G$.
Pick an index $i^* \in [r]$ that is a maximizer of $\RTD(\cC_{con}(G_i))$.
Then
\[
\RTD(\cC_{con}(G)) \stackrel{(\ref{eq:components-rtd})}{\ge} 
\max_{i=1,\ldots,k}\RTD(\cC_{con}(G_i))  \ge 
\max_{i=1,\ldots,k}\ell(G_i) = \ell(G) \enspace .
\] 
We finally show that $\ell(G)+1$ 
bounds $\RTD(\cC_{con}(G)) = \PBTD(\cC_{con}(G))$ from above.
Let's have superset-preferences on the concept class $\cC_{con}(G)$.
Fix a concept $X \in \cC_{con}(G)$. Teach $X$ by presenting 
an (arbitrarily chosen) element of $X$ as a positive example
and all elements in $N_G^o(X)$ as negative examples. Since $N_G^o(X)$
separates the vertices in $X$ from all vertices outside $N_G(x)$,
the resulting version space consists of the set $X$ and all its
subsets. Thanks to superset-preferences, $X$ is the unique most
preferred concept in the version space. The size of our teaching set
for $X$ equals $1 + |N_G^o(X)|$.  It suffices to show 
that $|N_G^o(X)| \le \ell(G)$. There is a unique
index $k \in [r]$ such that $X \in \cC_{con}(G_k)$.
Clearly $N_G^o(X) = N_{G_k}^o(X)$. We know from
Corollary~\ref{cor:tree2spanning-tree} that $|N_{G_k}^o(X)| \le \ell(G_k)$.
Clearly $\ell(G_k) \le \ell(G)$. 
\end{enumerate}
\end{proof}

The following result characterizes the graphs $G$ with the property
that $\VCD(\cC_{con}(G)) = \ell(G)+1$. A similar characterization
is already contained in~\cite{KKRUW1997}.

\begin{lemma} \label{lem:con-vcd}
For any connected graph $G = (V,E)$, the following assertions are equivalent:
\begin{enumerate}
\item 
There exists a maximum-leaf spanning tree $T^*$ of $G$ and 
and an interior vertex $u$ of $T^*$ such that each pair of leaves in $T^*$ 
can be connected by a path in $G$ which does neither contain $u$ nor any
other leaf in $T^*$.
\item
There exists a tree $T \le G$ with $\ell(G)$ leaves and with an interior
vertex $u$ such that each pair of leaves in $T$ can be connected by a path
in $G$ which does neither contain $u$ nor any other leaf in $T$.
\item
$\VCD(\cC_{con}(G)) = \ell(G)+1$.
\end{enumerate}
\end{lemma}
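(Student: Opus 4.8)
The plan is to prove the chain of implications $(1)\impl(2)\impl(3)\impl(1)$, exploiting Lemma~\ref{lem:tree2spanning-tree} (especially part~3b, which is tailor-made for preserving the ``connecting path avoiding $u$'' property under tree-extension) and the shattering argument that underlies Theorem~\ref{th:con-vcd}.

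The implication $(1)\impl(2)$ is immediate: a maximum-leaf spanning tree $T^*$ has exactly $\ell(G)$ leaves, so taking $T = T^*$ already furnishes a tree $T \le G$ with $\ell(G)$ leaves and the required interior vertex $u$. Conversely, for $(2)\impl(1)$ I would invoke part~3 of Lemma~\ref{lem:tree2spanning-tree}: starting from a tree $T \le G$ with $k = \ell(G)$ leaves, that lemma extends $T$ to a spanning tree $T^*$ of $G$ which (by part~3a) consists of $T$ together with paths appended at its leaves, hence still has exactly $\ell(G)$ leaves and is therefore a maximum-leaf spanning tree. Part~3b then guarantees that the pairwise-connecting property (avoiding the interior vertex $u$ and all other leaves) is inherited from $T$ by $T^*$, which is exactly assertion~(1). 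So the equivalence of the two combinatorial conditions is essentially handed to us by the preparatory lemma.

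The substantive content is the equivalence with $(3)$. For $(2)\impl(3)$ I would exhibit a shattered set of size $\ell(G)+1$. The natural candidate is $L \cup \{u\}$, where $L$ is the set of $\ell(G)$ leaves of $T$ and $u$ is the distinguished interior vertex; this set has size $\ell(G)+1$, so shattering it forces $\VCD(\cC_{con}(G)) \ge \ell(G)+1$, and the reverse inequality is Theorem~\ref{th:con-vcd}. To shatter, I must realize every subset $S \seq L \cup \{u\}$ as $Y \cap (L\cup\{u\})$ for some connected $Y$. When $u \in S$, one connects the chosen leaves through $u$ using the tree $T$ itself (which connects all leaves through its interior, $u$ included), taking $Y$ to be the minimal subtree of $T$ spanning $S \sm \{u\}$ together with $u$. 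The interesting case is $u \notin S$: here I use the hypothesis that any two leaves in $S$ can be joined by a path in $G$ avoiding $u$ and all other leaves, so gluing these paths produces a connected subgraph $Y$ containing exactly the leaves in $S$, meeting $L$ in $S$ and missing $u$. The delicate point — and the \textbf{main obstacle} I anticipate — is controlling the intersection of $Y$ with $L \cup \{u\}$: the connecting paths may pass through interior vertices of $T$ but, by hypothesis, through no leaf other than the endpoints and never through $u$, so $Y \cap L = S$ and $u \notin Y$, giving $Y \cap (L\cup\{u\}) = S$ as required. Verifying that these path-compositions can always be arranged to avoid stray leaves is where the ``avoiding any other leaf'' clause in condition~(2) does the real work.

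Finally, for $(3)\impl(1)$ — equivalently, the contrapositive $\neg(1)\impl\neg(3)$, i.e.\ $\VCD(\cC_{con}(G)) \le \ell(G)$ whenever no maximum-leaf spanning tree admits such a vertex $u$ — I would argue that any shattered set $S$ of size $\ell(G)+1$ must, via the Sauer-Shelah-type counting and the structure of connected sets, force the existence of a tree with $\ell(G)$ leaves whose leaves can be mutually connected around some interior vertex, contradicting $\neg(1)$. Concretely, a shattered set of size $\ell(G)+1$ cannot be realized as a single connected set (since the largest open neighborhood has size $\ell(G)$ by Corollary~\ref{cor:tree2spanning-tree}), so one analyzes how the required $2^{\ell(G)+1}$ traces are produced and extracts from the ``all-but-one'' subsets a family of connecting paths that reassemble into the configuration of condition~(2), which by $(2)\impl(1)$ yields~(1). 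I expect this direction to require the most care, as it must reverse-engineer the geometric condition from the mere existence of the shattering.
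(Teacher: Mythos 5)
Three of your four implications are correct and coincide with the paper's proof: $(1)\impl(2)$ is immediate, $(2)\impl(1)$ is exactly the intended application of part~3 of Lemma~\ref{lem:tree2spanning-tree}, and your proof of $(2)\impl(3)$ shatters $L\cup\{u\}$ just as the paper's Claim does (your choice of the minimal subtree of $T$ spanning $S$ when $u\in S$ is a cosmetic variant of the paper's choice, namely $T$ minus the leaves in $L\sm S$, and you correctly invoke Theorem~\ref{th:con-vcd} for the matching upper bound $\VCD(\cC_{con}(G))\le\ell(G)+1$).

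The gap is the remaining implication $(3)\impl(1)$, which you leave as a sketch whose concrete hints point the wrong way. The paper proves $(3)\impl(2)$ directly: take a shattered set $\{v_0,v_1,\ldots,v_\ell\}$ of size $\ell(G)+1$ and use its \emph{two-element} traces. Shattering gives, for each $i\ge1$, a connected set whose intersection with the shattered set is exactly $\{v_0,v_i\}$, hence a path $P_i$ from $v_0$ to $v_i$ avoiding every other $v_j$; the union of these paths yields a tree $T$ whose leaves are exactly $v_1,\ldots,v_\ell$ (each $v_i$ with $i\ge1$ occurs only on $P_i$ and only as an endpoint, and $v_0$ must be interior, since otherwise $T$ would have $\ell(G)+1$ leaves, contradicting Corollary~\ref{cor:tree2spanning-tree}); the traces on the pairs $\{v_i,v_j\}$ then give precisely the connecting paths avoiding $v_0$ and all other leaves, i.e., assertion~(2), after which your own $(2)\impl(1)$ finishes. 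Your sketch instead appeals to ``Sauer--Shelah-type counting'' (which plays no role here) and to the ``all-but-one'' traces: but a connected set $Y$ with $Y\cap S = S\sm\{v_j\}$ \emph{contains} all the other shattered vertices, so the paths inside $Y$ avoid nothing --- it is the doubleton traces, not the all-but-one traces, that encode the avoidance property you need. Likewise, your claim that ``a shattered set of size $\ell(G)+1$ cannot be realized as a single connected set'' is unjustified and irrelevant: shattering in fact requires a connected concept whose trace is all of $S$. As it stands, the direction you yourself identify as the hardest has no viable argument.
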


\begin{proof}
The first assertion clearly implies the second-one. An inspection 
of the third assertion in Lemma~\ref{lem:tree2spanning-tree} reveals 
that the second assertion also implies the first-one. We next show 
that the second assertion implies the third-one. This implication
is the special case $k = \ell(G)$ of the following more general result:
\begin{description}
\item[Claim:]
Suppose that there exists a tree $T \le G$ with $k$ leaves and with an interior
vertex $u$ such that each pair of leaves in $T$ can be connected by
a path in $G$ that does neither contain $u$ nor any other leaf in $T$.
Then $\VCD(\cC_{con}(G)) \ge k+1$.
\item[Proof of the Claim:]
Let $L$ be the set of leaves in $T$. It suffices 
to show that the set $L\cup\{u\}$ is shattered by $\cC_{con}(G)$. 
More concretely, if $S \seq L$ is an arbitrary but fixed subset of $L$, 
we must find connected sets $X,X' \seq V$ such that 
\[
X \cap (L\cup\{u\}) = S\ \mbox{ and }\ X' \cap (L\cup\{u\}) = S\cup\{u\}
\enspace .
\]  
The set $X'$ can be simply chosen as the set of all vertices in $T$ 
except for the leaves in $L \sm S$. If $S = \eset$, we may set $X := \eset$.
Suppose now that $S$ is a nonempty subset of $L$.
Choose $X$ now as the vertex set of the subgraph of $G$ consisting 
of all paths $P_{v,w}$ where $v,w$ are (not necessarily distinct) 
vertices in $S$ and $P_{v,w}$ is a path connecting $v$ and $w$ 
but excluding the vertex $u$ as well as any vertex in $L \sm \{v,w\}$.
We conclude from this discussion that the set $L \cup \{u\}$ is indeed 
shattered by $\cC_{con}(G)$ so that $\VCD(\cC_{con}(G)) \ge k)+1$. 
\end{description}
In case of $k=\ell(G)$, we obtain $\VCD(\cC_{con}(G)) = \ell(G)+1$. \\
The proof of the lemma can be accomplished by showing that the third 
assertion implies the second-one.
Set $\ell = \ell(G)$ and consider $\ell+1$ vertices $v_0,v_1,\ldots,v_\ell$
which are shattered by $\cC_{con}(G)$. Since these vertices are shattered,
there must exist paths $P_i$ for $i=1,\ldots,k$ which connect $v_i$ and $v_0$
but exclude any $v_j$ with $j$ different from either $0$ or $i$.
These paths can be used in the obvious way to build a tree $T$
with root $v_0$ and leaves $v_1,\ldots,v_\ell$. Again by the shattering
argument, it follows that each pair of leaves in $T$ can be connected
by a path in $G$ which does neither contain $v_0$ nor any other leaf in $T$.
Hence  the second assertion is valid, which completes the proof of the lemma. 
\end{proof}

\begin{definition}[Opponent of a Connected Set]
Suppose that $G = (V,E)$ is a graph and $X \in \cC_{con}(G) \sm \{\eset\}$. 
Then $Y \in \cC_{con}(G)$ is called a \emph{maximal opponent of $X$} if $Y$ 
spans a connected component in $\Span{V \sm N_G(X)}_G$. 
Any connected subset of a maximal opponent of $X$ is called 
an \emph{opponent of $X$}. 
\end{definition}

\begin{lemma} \label{lem:opponents}
Let $G$ be a graph, let $X \in \cC_{con}(G) \sm \{\eset\}$, and 
let $Y$ be a maximal opponent of $X$. Then $N_G^o(Y) \seq N_G^o(X)$.
Moreover, if $X$ and $Y$ are not part of the same component
of $G$, then $N_G^o(Y) = \eset$. 
\end{lemma}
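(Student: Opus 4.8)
The plan is to exploit directly the defining property of a maximal opponent. Since $Y$ spans a connected component of $\Span{V \sm N_G(X)}_G$, we have in particular $Y \seq V \sm N_G(X)$. This single inclusion already records two useful facts: no vertex of $Y$ lies in $X$, and no vertex of $Y$ is adjacent to any vertex of $X$ (for if $y \in Y$ were adjacent to some $x \in X$, then $y \in N_G^o(x) \seq N_G(X)$, contradicting $y \in V \sm N_G(X)$). Together with the definitions of the open and closed neighborhoods, this should yield both assertions by an elementary case analysis.

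For the inclusion $N_G^o(Y) \seq N_G^o(X)$, I would fix an arbitrary $z \in N_G^o(Y)$, so that $z \notin Y$ and $z$ is adjacent to some $y \in Y$, and then locate $z$. First, $z$ cannot lie in $V \sm N_G(X)$: if it did, the edge $\{y,z\}$ would be present in $\Span{V \sm N_G(X)}_G$, forcing $z$ into the same connected component as $y$, i.e.\ $z \in Y$, contradicting $z \notin Y$. Hence $z \in N_G(X) = X \cup N_G^o(X)$. Second, $z$ cannot lie in $X$: otherwise its neighbor $y$ would belong to $N_G^o(X) \seq N_G(X)$, contradicting $Y \cap N_G(X) = \eset$. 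The only remaining possibility is $z \in N_G^o(X)$, which is exactly the claim.

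For the second assertion I would argue by contraposition. Suppose $N_G^o(Y) \neq \eset$ and pick $z \in N_G^o(Y)$. By the inclusion just established, $z \in N_G^o(X)$, so $z$ is adjacent to some $x \in X$; and by choice $z$ is adjacent to some $y \in Y$. The walk $x, z, y$ then witnesses that $x$ and $y$ lie in a common component of $G$. Since $X$ and $Y$ are connected sets, all of $X$ lies in the component of $x$ and all of $Y$ lies in the component of $y$, so $X$ and $Y$ share a component. This contradicts the hypothesis that $X$ and $Y$ belong to different components, whence $N_G^o(Y) = \eset$.

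I do not anticipate a genuine obstacle here; the proof is a short exercise in unwinding the definitions. The one point requiring care is to keep straight the distinction between the open neighborhood $N_G^o(X)$ and the closed neighborhood $N_G(X)$ when placing the vertex $z$: it is precisely the fact that $Y$ avoids the \emph{whole} closed neighborhood $N_G(X)$, rather than merely $X$, that rules out $z \in X$ and drives the argument.
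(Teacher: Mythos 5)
Your proof is correct, but it is organized differently from the paper's. The paper begins by decomposing $G$ into its connected components $G_i=(V_i,E_i)$: if $X \seq V_j$ and $Y \seq V_k$ with $k \neq j$, maximality forces $Y = V_k$, so $N_G^o(Y) = \eset$ and both claims hold trivially; only in the case $k=j$ does it prove the inclusion, using the same two disjointness facts you use ($N_G^o(Y) \cap X = \eset$ by separation, and $N_G^o(Y)$ disjoint from $V_k \sm N_G(X)$ by maximality of $Y$ as a component of the induced subgraph). You instead prove the inclusion $N_G^o(Y) \seq N_G^o(X)$ uniformly, with no reference to components at all --- your placement of the vertex $z$ rests on exactly those two facts but never needs to restrict to a common component --- and you then obtain the ``moreover'' clause as a corollary of the inclusion by contraposition: any $z \in N_G^o(Y) \seq N_G^o(X)$ yields the walk $x,z,y$ joining the two sets, so they must share a component. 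What your route buys is economy: it shows the component decomposition in the paper's argument is dispensable, and it exhibits the second assertion as a formal consequence of the first rather than as a separate case. What the paper's route records, in exchange, is a slightly stronger structural fact in the disjoint-component case: there $Y$ is an \emph{entire} component of $G$, not merely a connected set with empty open neighborhood.
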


\begin{proof}
Let $G_i = (V_i,E_i)$ with $i=1,\ldots,r$ be the components of $G$.
There are unique indices $j,k \in [r]$ such that $X \seq V_j$
and $Y \in V_k$. If $k \neq j$ then, since $Y$ is a maximal opponent,
we must have $Y = V_k$ so that $N_G^o(Y) = \eset$.
Suppose now that $k=j$.
Since $N_G^o(X)$ separates $X$ from $Y$, we have $N_G^o(Y) \cap X = \eset$.
Since $Y$ spans a complete connected component in $\Span{V \sm N_G(X)}_G$ 
and $V \seq V_k$, the set $Y$ must span a complete connected component 
in $\Span{V_k \sm N_G(X)}_G$. Thus $N_G^o(Y) \cap (V_k \sm N_G(X)) = \eset$. 
Hence $N_G^o(Y) \seq N_G^o(X)$.
\end{proof}

\begin{theorem} \label{th:con-rtd-vcd}
For any graph $G$: $\RTD(\cC_{con}(G)) \le \VCD(\cC_{con}(G))$.
\end{theorem}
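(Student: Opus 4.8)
The plan is to switch to the equivalent PBT-setting and exhibit, for every graph $G$, a PB-teacher for $\cC_{con}(G)$ of order at most $\VCD(\cC_{con}(G))$; since $\PBTD$ and $\RTD$ coincide on finite classes, Remark~\ref{rem:order-teacher} then yields the claimed bound. First I would reduce to the connected case. By Remark~\ref{rem:components}, both sides of the inequality are obtained as the maximum over the connected components of $G$ (equations~(\ref{eq:components-vcd}) and~(\ref{eq:components-rtd})), so it suffices to prove $\RTD(\cC_{con}(G_i)) \le \VCD(\cC_{con}(G_i))$ for each component $G_i$. Hence assume $G$ is connected and write $\ell = \ell(G)$. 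By Theorem~\ref{th:con-rtd} we always have $\RTD(\cC_{con}(G)) \le \ell+1$, and by Theorem~\ref{th:con-vcd} we have $\ell \le \VCD(\cC_{con}(G)) \le \ell+1$. Thus the case $\VCD(\cC_{con}(G)) = \ell+1$ is immediate, and the entire difficulty is concentrated in the case $\VCD(\cC_{con}(G)) = \ell$, where I must improve the generic upper bound and show $\RTD(\cC_{con}(G)) \le \ell$.

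For this case I would work with the preference relation $\prec$ that ranks concepts lexicographically: $C' \prec C$ whenever $|N_G^o(C')| < |N_G^o(C)|$, or $|N_G^o(C')| = |N_G^o(C)|$ and $C' \subset C$. This is a strict partial order, and by Corollary~\ref{cor:tree2spanning-tree} no concept has an open neighborhood larger than $\ell$. The concepts $X$ with $|N_G^o(X)| = \ell$ (call them \emph{saturated}) are taught using only the $\ell$ negative examples $\{(y,-): y \in N_G^o(X)\}$. The version space they induce consists of $\eset$, of the connected subsets of $X$, and of the opponents of $X$. Every connected subset $Z \subset X$ satisfies $Z \prec X$ (either $|N_G^o(Z)| < \ell$, or $|N_G^o(Z)| = \ell$ and we win on the inclusion tie-break), and $\eset \prec X$; the role of the $\VCD = \ell$ hypothesis is to also force every opponent below $X$.

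The key lemma I would isolate is: if $\VCD(\cC_{con}(G)) = \ell$ and $X$ is saturated, then every opponent $Y$ of $X$ satisfies $|N_G^o(Y)| < \ell$, so that $Y \prec X$ and $X$ is the unique $\prec$-maximal concept in its version space. To prove it, suppose some maximal opponent $Y$ had $|N_G^o(Y)| = \ell$. By Lemma~\ref{lem:opponents} we have $N_G^o(Y) \seq N_G^o(X)$, and equal cardinalities force $N_G^o(Y) = N_G^o(X) =: N$, a set of $\ell$ vertices disjoint from, and adjacent to, both of the disjoint connected sets $X$ and $Y$. Attaching one pendant edge from $X$ to each vertex of $N$ to a spanning tree of $\Span{X}_G$ produces a tree $T \le G$ whose leaf set is exactly $N$ (it cannot have more than $\ell$ leaves), so every vertex of $X$ is interior; picking any $u \in X$ and noting that each pair of leaves in $N$ can be joined by a path running through the connected set $Y$ (to which every vertex of $N$ is adjacent), which avoids $u$ and every other leaf, the Claim inside the proof of Lemma~\ref{lem:con-vcd} yields $\VCD(\cC_{con}(G)) \ge \ell+1$, a contradiction. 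The degenerate values $\ell \le 1$ are handled directly.

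It remains to teach the \emph{unsaturated} concepts $X$, those with $|N_G^o(X)| < \ell$, with at most $\ell$ examples and with $X$ as the $\prec$-maximal concept of the induced version space. The natural attempt is to add a single positive example $x \in X$ to the negatives $N_G^o(X)$, giving size $1 + |N_G^o(X)| \le \ell$; the positive example confines the version space to the connected sets $Z$ with $x \in Z \seq X$, which eliminates all opponents at once. This succeeds immediately whenever $X$ already maximizes $|N_G^o(\cdot)|$ among its own nonempty connected subsets. The main obstacle is exactly the opposite situation, where $X$ has a proper connected subset $Z^\ast$ with $|N_G^o(Z^\ast)| > |N_G^o(X)|$: such a $Z^\ast$ is $\prec$-preferred over $X$ and, if it contains the chosen positive example, survives in the version space and spoils the teaching. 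I expect this to be the technical heart of the argument: one must choose the positive example (or a small set of positive examples, staying within the remaining budget $\ell - |N_G^o(X)|$) so as to exclude from the version space every such higher-boundary subset, once again exploiting the $\VCD = \ell$ structure to bound how large and how numerous these subsets can be.
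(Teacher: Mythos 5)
Your proof of the maximal-opponent claim (pendant edges attached to a spanning tree of $\Span{X}_G$, paths between leaves routed through $Y$, then the Claim inside Lemma~\ref{lem:con-vcd}) is essentially the paper's key step, and your reduction to connected components and the case $\VCD = \ell+1$ are fine. But your key lemma is false as stated, and this breaks the saturated case that you believe is settled. You claim that \emph{every} opponent $Y$ of a saturated $X$ satisfies $|N_G^o(Y)| < \ell$, yet your proof only covers \emph{maximal} opponents. Counterexample: let $G = P_7$ with vertices $1,\ldots,7$, so $\ell(G) = \VCD(\cC_{con}(G)) = 2$, and let $X = \{5,6\}$, which is saturated since $N_G^o(X) = \{4,7\}$. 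The maximal opponent $\{1,2,3\}$ indeed has open neighborhood $\{4\}$ of size $1 < \ell$, but its connected subset $\{2\}$ is an opponent of $X$ with $N_G^o(\{2\}) = \{1,3\}$ of size $\ell$. Under your lexicographic order (neighborhood size first, inclusion as tie-break) the disjoint sets $\{2\}$ and $\{5,6\}$ are incomparable with equal neighborhood sizes, so $\{2\}$ lies in the version space of the negative-only sample $\{(4,-),(7,-)\}$ and is not below $X$: teaching fails. The inclusion tie-break cannot rescue this, since disjoint sets are never related by it, and your relation is already transitive, so there is no closure to take. In addition, the unsaturated case, which you explicitly leave open, is a genuine obstruction for your order: in the star $K_{1,n}$ (where $\VCD = \ell = n$) the full vertex set has empty open neighborhood while its subset consisting of the center alone has neighborhood size $n$, so your order prefers the center over the full set no matter which single positive example you add.

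Both failures have the same root cause: you inverted the priorities of the two criteria. The paper puts superset-preference (inclusion) \emph{first} and uses neighborhood size only as a tie-break among incomparable sets (the preference relation being understood as the transitive closure of these commitments). With that order, (i) an unsaturated concept $X$ is taught by one positive example plus the negatives $N_G^o(X)$; the version space then consists of connected subsets of $X$ containing the positive example, all of which lie below $X$ by superset-preference alone --- so this case, which you expected to be the technical heart, is trivial; and (ii) for a saturated $X$ taught by negatives only, a non-maximal opponent $Y'$ lies below its maximal opponent $Y$ by superset-preference, while $Y \prec X$ by the maximal-opponent claim (which you did prove) combined with the tie-break, so transitivity pushes every opponent below $X$. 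In short, the lemma you proved is exactly the paper's key Claim, but the preference relation you wrapped around it cannot support the theorem, and the remaining work you defer is not a technicality: with your order it includes cases that are provably unteachable within the budget $\ell$ by the samples you prescribe.
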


\begin{proof}
If $\VCD(\cC_{con}(G)) = \ell(G)+1$, 
then $\RTD(\cC_{con}(G)) \le \VCD(\cC_{con}(G))$. This is  simply 
because $\ell(G)+1$, as we know already, is a general upper bound
on $\RTD(\cC_{con}(G))$. We may therefore assume 
that $\VCD(\cC_{con}(G)) = \ell(G)$.  Consider the following commitments 
in the preference-based setting:
\begin{enumerate}
\item 
We have superset-preferences on the concept class $\cC_{con}(G)$.
\item
If $X \neq Y \in \cC_{con}(G)$ such that, first, neither $X \subset Y$
nor $Y \subset X$ and, second, $|N_G^o(X)| > |N_G^o(Y)|$, then $X$
is preferred over $Y$. In other words, among incomparable connected
sets, the ones with the largest open neighborhood are preferred most.
\item
A concept $X \in \cC_{con}(G)$ with $|N_G^o(X)| < \ell(G)$
is taught be presenting the vertices in $N_G^o(X)$ as negative
examples and an arbitrarily chosen vertex $u \in X$ as a positive
example.
\item
A concept $X \in \cC_{con}(G)$ with $|N_G^o(X)| = \ell(G)$
is taught be presenting the vertices in $N_G^o(X)$ as negative
examples.
\end{enumerate}
It is obvious that all teaching sets are of size at most $\ell(G)$.
It is furthermore obvious that, thanks to superset-preferences,
a teaching set $N_G^o(X)$ for $X$augmented by a single instance in $X$
makes $X$ the unique most preferred concept in the version space. 
Consider now a concept $X \in \cC_{con}(G)$ with $\ell(G)$
vertices in its open neighborhood. Thanks to superset-preferences,
the most preferred concepts in the version space are $X$ and its
maximal opponents. Recall from Lemma~\ref{lem:opponents} that
the open neighborhood of any maximal opponent $Y$ of $X$ is a subset
of $N_G^o(X)$. 
\begin{description}
\item[Claim:]  
For each maximal opponent $Y$ of $X$, we have $N_G^o(Y) \subset N_G^o(X)$. 
\item[Proof of the Claim:]
Let $G_i = (V_i,E_i)$ with $i=1,\ldots,r$ be the components of $G$.
Assume for sake of contradiction that there exists a maximal opponent $Y$ 
of $X$ such that $N_G^o(Y) = N_G^o(X)$. Then $X$ and $Y$ must belong
to the same component, say $X,Y \seq V_k$.
According to Lemma~\ref{lem:tree2spanning-tree} there exists 
a tree $T_X \le G_k$ that is a spanning tree of $\Span{N_{G_k}(X)}_{G_k}$ 
and has the $\ell(G) = \ell(G_k)$ vertices of $N_G^o(X) = N_{G_k}^o(X)$ among 
its leaves (so that there can be no other leaves in $T_X$). Let $T_Y$ denote
the corresponding tree for $Y$. Since $N_{G_k}^o(X)$ separates $X$
from $Y$, the trees $T_X$ and $T_Y$ have no interior vertex in common.
Choose two arbitrary leaves $z_1$ and $z_2$ of $T_X$. Then the path
from $z_1$ to $z_2$ in $T_Y$ does neither contain an interior vertex
of $T_Y$ nor another leaf of $T_X$. We may conclude from
Lemma~\ref{lem:con-vcd} that $\VCD(\cC_{con}(G_k)) = \ell(G_k)+1 = \ell(G)+1$.
Since $\VCD(\cC_{con}(G)) \ge \VCD(\cC_{con}(G_k))$, this is in contradiction 
with our initial assumption that $\VCD(\cC_{con}(G)) \le \ell(G)$.
\end{description}
Since $N_G^o(X)$ separates $X$ from $Y$, we have $X \cap Y = \eset$.
This implies that none of $X$ and $Y$ is a subset of the other-one.
Thanks to $N_G^o(Y) \subset N_G^o(X)$, the concept $X$ is preferred
over $Y$. Since this reasoning applies to each maximal opponent of $X$,
we may conclude that $X$ is the unique most preferred concept
in the version space induced by the (negative) examples in $N_G^o(X)$.
\end{proof}

\noindent
The preceding results imply that, for any graph $G$, we have that
\[
\ell(G) \le \RTD(\cC_{con}(G)) \le \VCD(\cC_{con}(G)) \le \ell(G)+1
\enspace .
\]
Exactly one of these inequalities must be strict. The examples below
(including the \hbox{graphs} $K_n$, $P_n$ and $C_4$ that we already discussed
in connection with $(\Delta,\RTD,\VCD)$-triples) demonstrate that the 
strict inequality can be in any of the three possible positions.

\begin{examples} \label{ex:con}
The \emph{$(\ell,\RTD,\VCD)$-triple} of a connected graph $G$ 
is defined as the triple 
\[ 
(\ell(G), \RTD(\cC_{con}(G)), \VCD(\cC_{con}(G)))
\enspace .
\]
Note that $\ell(K_n) = n-1 = \Delta(K_n)$ 
and $\cC_{con}(K_n) = 2^{[n]} \sm \{\eset\} = \cC_{star}(K_n)$.
Therefore the $(\ell,\RTD,\VCD)$-triple of $K_n$ equals
the $(\Delta,\RTD,\VCD)$-triple of $K_n$ and, as we know
already, the latter equals $(n-1,n-1,n-1)$. \\
Consider now paths $P_n$ with $n\ge2$. Clearly\footnote{The two 
endpoints are the leaves.} $\ell(P_n)=2$, 
$\cC_{con}(P_n) = \{[i:j]: 1 \le i \le j \le n\}$
and $\VCD(\cC_{con}(P_n)) = 2$. It follows 
that $\RTD(\cC_{con}(P_n)) = 2$ as well.
Hence the $(\ell,\RTD,\VCD)$-triple of $P_n$ equals $(2,2,2)$. \\
We denote the vertices of $C_4$ again by $a,b,c,d$ (in this circular
order). Clearly $\ell(C_4) = 2$, 
and $\cC_{con}(C_4) = 2^{\{a,b,c,d\}} \sm \{ \eset,\{a,c\},\{b,d\} \}$,
which is a concept class of size $16-3 = 13$. The VC-dimension 
of this class is at most $3$. From $13 > 11 = 1 + 4 + \binom{4}{2}$
and~(\ref{eq:sauer-rtd}), we infer that its RT-dimension is at least $3$. 
We may therefore conclude that
the $(\ell,\VCD,\RTD)$-triple of $C_4$ equals $(2,3,3)$. \\
Consider finally the graph $G$ that is shown in Fig~\ref{fig:con}.
It consists of a complete binary tree of height $2$ 
(root $a$, interior vertices $a,b,c$, leaves $d,e,f,g$)  
plus four additional edges (shown in red color) 
which connect $\{d,e\}$ completely bipartitely with $\{f,g\}$.
It is easy to see that any pair of leaves can be connected in $G$
by a path which contains neither the root $a$ nor another leaf. 
For instance, the path $d,b,e$ connects the leaves $d$ and $e$, 
and the leaves $d$ 
and $f$ are directly connected by one of the $4$ additional edges. 
We may therefore apply the claim within the proof of Lemma~\ref{lem:con-vcd} 
and conclude that the set $\{a,d,e,f,g\}$ is shattered 
so that $\VCD(\cC_{con}(G)) \ge 5$. 
We claim that $\RTD(\cC_{con}(G)) \le 4$. This would directly imply
that the $(\ell,\RTD,\VCD)$-triple equals $(4,4,5)$. 
In order to prove the claim, we will show show that each connected set 
in $G$ has a teaching set of size at most $4$ with respect to the 
simple preference relation of preferring sets of larger size over sets 
of smaller size:
\begin{itemize}
\item
For a connected set $X$ of size $3$ or more, we can simply present 
all vertices outside $X$ (at most $4$ many) as negative examples.
\item
For singleton sets $\{x\}$, we can present $x$ as a positive example
and all vertices adjacent to $x$ as negative examples. 
Since $\Delta(G) = 3$, the size of this teaching set is bounded by $4$.
\item
The connected sets of size $2$ (the only remaining case) are
equal to the edges of $G$.
By a straightforward symmetry argument, it suffices to consider
the edges $\{a,b\}$, $\{b,d\}$ and $\{d,f\}$. 
We will choose the respective open neighborhood as the teaching set
(containing negative examples only). The following table 
shows for each of these edges the respective open neighborhood 
(= teaching set) and the respective set of opponents:
\[
\begin{array}{||c|c|c||}
\hline
\mbox{edge}& \mbox{open neighborhood} & opponent(s) \\
\hline
\hline
\{a,b\} & \{c,d,e\} & \{f\} , \{g\} \\
\hline
\{b,d\} & \{a,e,f,g\} & \{c\} \\ 
\hline
\{d,f\} & \{b,c,e,g\} & \{a\} \\
\hline
\end{array}
\]
As can be seen from this table, each of the three considered edges has
only (one or two) opponents of size $1$. Since we prefer sets of larger 
size over sets of smaller size, each of the three edges will be preferred
over its opponent(s) and over its connected proper (singleton) subsets.
\end{itemize}
We conclude from this discussion that $\RTD(\cC_{con}(G)) \le 4$.
\end{examples}

\begin{figure}[hbt]

        \begin{center}
\begin{tikzpicture}


        \tikzset{every node/.style={draw,circle}}

        \def\x{0} \def\y{0}
        \node (a) at (\x,\y) {$a$};
        \node (b) at (\x-2,\y-1) {$b$};
        \node (c) at (\x+2,\y-1) {$c$};
        \node (d) at (\x-3,\y-4) {$d$};
        \node (e) at (\x-1,\y-2) {$e$};
        \node (f) at (\x+1,\y-2) {$f$};
        \node (g) at (\x+3,\y-4) {$g$};

        \draw (a)--(b);
        \draw (a)--(c);
        \draw (b)--(d);
        \draw (b)--(e);
        \draw (c)--(f);
        \draw (c)--(g);
        \draw[red] (d)--(f);
        \draw[red] (d)--(g);
        \draw[red] (e)--(f);
        \draw[red] (e)--(g);

\end{tikzpicture}
        \end{center}
        \caption{A graph $G$ with $\RTD(\cC_{con}(G)) = 4$
                 and $\VCD(\cC_{con}(G)) = 5$.  \label{fig:con}}
\end{figure}

\paragraph{Directions of Future Research.}

Clarify whether the RTD-conjecture can be confirmed for concept classes 
induced by graphs in a way where concepts represent paths, cycles, cliques 
or other subgraphs of the given graph $G$. \\
Explore whether the Sample-Compression conjecture can be confirmed for 
graph-induced concept classes.

\bibliographystyle{plain} 

\begin{thebibliography}{10}

\bibitem{CCT2016}
Xi~Chen, Yu~Cheng, and Bo~Tang.
\newblock On the recursive teaching dimension of {VC} classes.
\newblock In {\em {Proceedings of NeurIPS 2016}}, pages 2164--2171, 2016.

\bibitem{DDSZ2013}
Malte Darnst{\"a}dt, Thorsten Doliwa, Hans~U. Simon, and Sandra Zilles.
\newblock Order compression schemes.
\newblock In {\em {Proceedings of the 24th International Conference on
  Algorithmic Learning Theory (ALT 2013)}}, pages 173--187, 2013.

\bibitem{DFSZ2014}
Thorsten Doliwa, Gaojian Fan, Hans~Ulrich Simon, and Sandra Zilles.
\newblock Recursive teaching dimension, {VC}-dimension and sample compression.
\newblock {\em Journal of Machine Learning Research}, 15:3107--3131, 2014.

\bibitem{GRSZ2017}
Ziyuan Gao, Christoph Ries, Hans~U. Simon, and Sandra Zilles.
\newblock Preference-based teaching.
\newblock {\em Journal of Machine Learning Research}, 18(31):1--32, 2017.

\bibitem{GK1995}
Sally~A. Goldman and Michael~J. Kearns.
\newblock On the complexity of teaching.
\newblock {\em Journal of Computer and System Sciences}, 50(1):20--31, 1995.

\bibitem{HWLW2017}
Lunija Hu, Ruihan Wu, Tianhong Li, and Liwei Wang.
\newblock Quadratic upper bound for recursive teaching dimension of finite {VC}
  classes.
\newblock In {\em Proceedings of Machine Learning Research}, volume~65, pages
  1147--1156, 2017.

\bibitem{KKRUW1997}
Evangelos Kranakis, Danny Krizanc, Berthold Ruf, Jorge Urrutia, and Gerhard
  Woeginger.
\newblock The {VC}-dimension of set systems defined by graphs.
\newblock {\em Discrete Applied Mathematics}, 77(3):237--257, 1997.

\bibitem{SSYZ2014}
Rahim Samei, Pavel Semukhin, Boting Yang, and Sandra Zilles.
\newblock Algebraic methods proving {S}auer's bound for teaching.
\newblock {\em Theoretical Computer Science}, 558:35--50, 2014.

\bibitem{S1972}
N.~Sauer.
\newblock On the density of families of sets.
\newblock {\em Journal of Combinatorial Theory, Series A}, 13(1):145--147,
  1972.

\bibitem{SH1972}
S.~Shelah.
\newblock A combinatorial problem: Stability and order for models and theories
  in infinitary languages.
\newblock {\em Pacific Journal of Mathematics}, 41:247--261, 1972.

\bibitem{SM1991}
Ayumi Shinohara and Satoru Miyano.
\newblock Teachability in computational learning.
\newblock {\em New Generation Computing}, 8(4):337--348, 1991.

\bibitem{SZ2015}
Hans~U. Simon and Sandra Zilles.
\newblock Open problem: Recursive teaching dimension versus {VC} dimension.
\newblock In {\em Proceedings of the 28th Annual Conference on Learning
  Theory}, pages 1770--1772, 2015.

\bibitem{VC1971}
Vladimir~N. Vapnik and Alexey~Ya. Chervonenkis.
\newblock On the uniform convergence of relative frequencies of events to their
  probabilities.
\newblock {\em Theory of Probability \& its Applications}, 16(2):264--280,
  1971.

\bibitem{W2003}
Manfred Warmuth.
\newblock Open problem: Compressing to {VC}-dimension many points.
\newblock In {\em Proceedings of the 18th Annual Conference on Computational
  Complexity}, pages 743--744, 2003.

\bibitem{ZLHZ2011}
Sandra Zilles, Steffen Lange, Robert Holte, and Martin Zinkevich.
\newblock Models of cooperative teaching and learning.
\newblock {\em Journal of Machine Learning Research}, 12:349--384, 2011.

\end{thebibliography}

\end{document}